 \theoremstyle{plain}
\newtheorem{defi}{Definition}
\newtheorem{theorem}{Theorem}
\newtheorem{lemma}{Lemma}
\newtheorem{result}{Result}
\newcommand{\bra}[1]{\langle{#1}|}
\newcommand{\ket}[1]{|{#1}\rangle}
\definecolor{blue}{rgb}{0,0.2,1}
\definecolor{red}{rgb}{0.9,0,0}
\newcommand{\Ord}[1]{\mathcal{O}\left( #1 \right)}
\newcommand{\tOrd}[1]{\tilde{\mathcal{O}}\left( #1 \right)}
\begin{document}

\title{Quantum computational finance: Monte Carlo pricing of financial derivatives}
\author{Patrick Rebentrost}
\email{pr@patrickre.com}
\author{Brajesh Gupt}
\email{brajesh@xanadu.ai}
\author{Thomas R. Bromley}
\email{tom@xanadu.ai}
\affiliation{Xanadu, 372 Richmond St W, Toronto, M5V 2L7, Canada}

\date{\today}

\begin{abstract}
This work presents a quantum algorithm for the Monte Carlo pricing of financial derivatives. We show how the relevant probability distributions can be prepared in quantum superposition, the payoff functions can be implemented via quantum circuits, and the price of financial derivatives can be extracted via quantum measurements. We show how the amplitude estimation algorithm can be applied to achieve a quadratic quantum speedup in the number of steps required to obtain an estimate for the price with high confidence. This work provides a starting point for further research at the interface of quantum computing and finance. 
\end{abstract}
\maketitle 

\section{Introduction}

A great amount of computational resources are employed by participants in today's financial markets. 
Some of these resources are spent on the pricing and risk management of financial assets and their derivatives.
Financial assets include the usual stocks, bonds, and commodities, based upon which more complex contracts such as financial derivatives \cite{Hull2012} are constructed. 
Financial derivatives are contracts that have a future payoff dependent upon the future price or the price trajectory of one or more underlying benchmark assets. For these derivatives, due to the stochastic nature of underlying assets, an important issue is the assignment of a fair price based on available information from the markets, what in short can be called the \textit{pricing} problem \cite{Shreve2004,Follmer2004}. The famous Black-Scholes-Merton (BSM) model \cite{Black1973,Merton1973} can price a variety of financial derivatives via a simple and analytically solvable model that uses a small number of input parameters. A large amount of research has been devoted to extending the BSM model to include complicated payoff functions and complex models for the underlying stochastic asset  dynamics. 

Monte Carlo methods have a long history in the sciences. Some of the earliest known applications were by Ulam, von Neumann, Teller, Metropolis \textit{et al.}~\cite{eckhardt1987stan} in the context of the Los Alamos project, which used early computational devices such as the ENIAC.
For the pricing problem in finance, the main challenge is to compute an expectation value of a function of one or more underlying stochastic financial assets. For models beyond BSM, such pricing is often performed via Monte Carlo evaluation \cite{Glasserman2003}. 

Quantum computing promises algorithmic speedups for a variety of tasks, such as factoring or optimization. One of the earliest proposed algorithms, known as Grover's search \cite{Grover1996},  developed in the mid 1990s, in principle allows for a quadratic speed-up of searching an unstructured database. To find the solution in a size $N$ database with high probability, a classical computer takes $\Ord{N}$ computational steps, while  a quantum computer takes $\Ord{\sqrt{N}}$ steps. This algorithm has been extended and generalized to function optimization \cite{Durr1996}, amplitude amplification and estimation \cite{Brassard2002}, integration \cite{Heinrich2002}, quantum walk-based methods for element distinctness \cite{Ambainis2007}, and Markov chain algorithms \cite{Szegedy2004,Wocjan2009}, for example. In particular, the amplitude estimation algorithm can provide close to quadratic speedups for estimating expectation values \cite{Xu2018,Giovannetti2006,Magniez2007,Knill2007,Somma2008,Poulin2009,Chowdhury2017}, and thus provides a speedup to a problem for which Monte Carlo methods are used classically~\cite{Montanaro2015,Xu2018}.

Understanding the applications and enhancements of quantum mechanics to computational finance is still in its relative infancy.  The framework of quantum field theory can be harnessed to study the evolution of derivatives~\cite{Baaquie2004}. More recent works focus on the application of quantum machine learning~\cite{de2018advances, halperin2017qlbs} and quantum annealing~\cite{rosenberg2016solving} to areas such as portfolio optimization~\cite{elsokkary2017financial} and currency arbitrage~\cite{rosenberg2016finding}. This work investigates a new perspective of how to use quantum computing for the pricing problem.
We combine well-known quantum techniques, such as amplitude estimation~\cite{Brassard2002} and the quantum algorithm for  Monte Carlo~\cite{Montanaro2015,Xu2018}
with the pricing of financial derivatives. We first show how to obtain the expectation value of a financial derivative as the output of a quantum algorithm. To this end, we show the ingredients required to set up the financial problem on a quantum computer: the elementary arithmetic operations to compute payoff functions, the preparation of the model probability distributions used in finance, and the ingredients for estimating the expectation value through an imprinted phase on ancilla qubits.  It is shown how to obtain the quadratic speedup via the amplitude estimation algorithm. We discuss the quantum resources required to price European and Asian call options, representing fundamental types of derivatives. We provide evidence using classical numerical calculations that a quadratic speedup in pricing can be attained. 

This article begins with a brief summary of the basics of derivative pricing. The Black-Scholes-Merton framework is introduced in Section \ref{sectionBS} and classical Monte Carlo estimation is discussed in Section \ref{sectionMC} within the context of finance. 
In Section \ref{sectionQuantumMC}, the quantum algorithm for Monte Carlo is given.
Section \ref{sectionQuantumCall} specializes this quantum algorithm to the pricing of a European 
call option. 
Section \ref{sectionAsian} discusses the pricing of Asian options. 

\section{Black-Scholes-Merton option pricing}
\label{sectionBS}

\begin{figure}
\includegraphics[width=1.0\columnwidth]{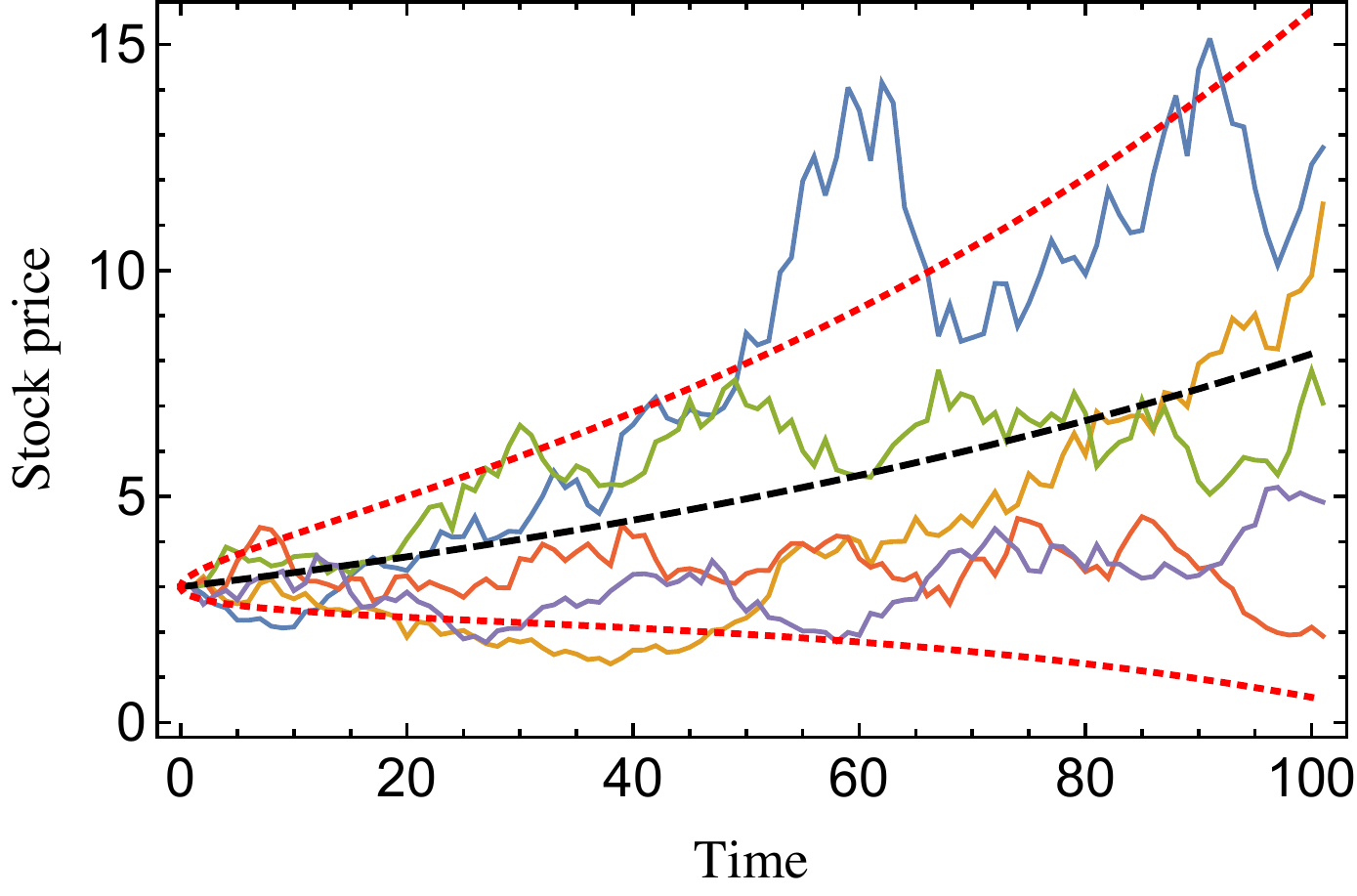}
	\caption{The price of a stock in the Black-Scholes-Merton model behaves stochastically as a geometric Brownian motion, changing each time step according to a log-normal distribution. Here, five sample price evolutions (in dollars) of a single stock are plotted as a function of time (in days). The resultant distribution is log-normally distributed, with mean (dashed black line) and one standard deviation (dotted red lines) illustrated. Pricing an option requires estimating the expected value of a payoff function on the stock at various times. The parameters are: initial price $S_0 =\$3$, drift $\alpha=0.1$, and volatility $\sigma=0.25$. }
	\label{figStock}
\end{figure}

The Black-Scholes-Merton (BSM) model \cite{Black1973,Merton1973} 
considers the pricing of financial derivatives (`options'). The original model assumes a single benchmark asset (`stock'), the price of which is stochastically driven by a Brownian motion, see Fig~\ref{figStock}. In addition, it assumes a risk-free investment into a bank account (`bond').
 We follow closely the discussion in the literature~\cite{Shreve2004} in the following.
\begin{defi} \label{defBM}
A Brownian motion $W_t$ is a stochastic process characterized by following attributes:
\begin{enumerate}
\item $W_0 = 0$; 
\item $W_{t}$ is continuous;
\item $W_{t}$ has independent increments;
\item $W_t - W_s \sim   N(0,t-s)\  {\rm for}\  t>s$.
\end{enumerate}
Here, $\mathcal N(\mu, \sigma^2)$ is the normal distribution with mean $\mu$ and
standard deviation $\sigma$. Point 3 means that the random variable $W_t - W_s$ for $t>s$
is independent of any previous time random variable $W_u$, $u<s$. The
probability measure under which $W_t$ is a Brownian motion shall be denoted by
$\mathbbm P$.
\end{defi}
The next step is to introduce a model for the market. 
\begin{defi}[Black-Scholes-Merton model] 
The Black-Scholes-Merton model consists of two assets, one risky (the stock), the other one risk-free (the bond). 
The risky asset is defined by the stochastic differential equation for the price dynamics given by
\begin{equation}
dS_t = S_t \alpha dt +  S_t \sigma dW_t,
\end{equation}
where $\alpha$ is the drift, $\sigma$ the volatility and $dW_t$ is a Brownian increment. The initial condition is $S_0$. In addition, the risk-free asset dynamics is given by
\begin{equation}
dB_t = B_t r dt,
\end{equation}
where $r$ is the risk-free rate (market rate). Set $B_0=1$. This model assumes that all parameters are constant, both assets can be bought or sold continuously and in unlimited and fractional quantities without transaction costs. Short selling is allowed, and the stock pays no dividends.
\end{defi}
Using Ito's lemma and the fact that $dW_t$ contributes an additional term in first order (due do its quadratic variation being proportional to $dt$), the risky asset stochastic differential equation can be solved as
\begin{equation}
S_t = S_0 e^{\sigma W_t +(\alpha -\sigma^2/2) t},
\end{equation}
see Appendix \ref{appendixBS}. Figure~\ref{figStock} shows sample evolutions of $S_t$. The risk-free asset is solved easily as
\begin{equation}
B_t = e^{r t}.
\end{equation}
This risk-free asset also is used for `discounting', i.e.~determining the present value of a future amount of money.
Let the task be to price an option. One of the simplest options is the 
European call option. The European call option gives the owner of the option the right to buy the stock at time $T\geq 0$ for  a pre-agreed price $K$.
\begin{defi} [European call option]
The European call option payoff is defined as
\begin{equation}\label{Eq:CallPay}
f(S_T) = \max \{ 0, S_T - K\},
\end{equation}
where $K$ is the strike price and $T$ the maturity date.
\end{defi}
The task of pricing is to evaluate at present time $t=0$ the expectation value of the option $f(S_T)$ on the stock on the maturity date.
The major tenet of risk-neutral derivative pricing is that the pricing is performed under a probability measure that shall not allow for arbitrage \cite{Follmer2004}. Simply put, arbitrage is a portfolio that has, at present, an expected future value that is greater than the current price of that portfolio. 
In the Black-Scholes-Merton framework, the stock price has a drift $\alpha$ under the $\mathbbm P$ measure. Any $\alpha \neq r$ allows for arbitrage under the measure $\mathbbm P$. When $\alpha > r$, one can make a profit above the market rate $r$ by investing in the stock, and when $\alpha < r$ one can make a profit by short selling the stock. Pricing of derivatives is performed under a probability measure where the drift of the stock price is exactly the market rate $r$. This pricing measure 
is denoted by $\mathbbm Q$ in contrast to the original measure $\mathbbm P$.

More formally, the probability measure $\mathbbm Q$ is defined such that the discounted asset price is a martingale, i.e.~the discounted expected value of the future stock price is the present day stock price itself. 
The martingale property is given in this context by 
\begin{equation}
  S_0 = e^{-r T} \mathbbm E_{\mathbbm Q} [S_T].
\end{equation}
Here, $e^{-r T}$ is the discount factor, which determines the present value of the payoff at a future time,
given the model assumption of a risk-free asset growing with $r$.
In addition, $\mathbbm E_{\mathbbm Q}[\cdot]$ denotes the $t=0$ expectation value under the measure $\mathbbm Q$.
Under this measure, investing in the stock does not, on average, return money above or below the market rate $r$, i.e.~does not allow for arbitrage. This feature is reflected in
the martingale price dynamics, which is given by 
\begin{equation}
dS_t = S_t r dt +  S_t \sigma d\tilde W_t,
\end{equation}
with the solution
\begin{equation}
S_t = S_0 e^{\sigma \tilde W_t +(r -\sigma^2/2) t}.
\end{equation}
Here, $\tilde W_t$ is a Brownian motion according to Definition \ref{defBM} under the martingale measure $\mathbbm Q$. The martingale property of $S_t $ is shown in Appendix \ref{appendixBS}, Lemma \ref{lemmaMartingale}.

The pricing problem is thus given by evaluating the risk-neutral price 
\begin{equation}\label{eqPrice}
\Pi = e^{-r T} \mathbbm E_{\mathbbm Q} [f(S_T)],
\end{equation}
which is the quantity of interest in this paper. For the simple European call option and several other options one can analytically solve the Black-Scholes-Merton model. A proof is sketched in Appendix \ref{appendixBS}.
\begin{result} [Black-Scholes-Merton price] \label{resultBS}
The risk-neutral price of the call option in Eq.~(\ref{Eq:CallPay}) is given by 
\begin{equation}
\Pi = \Phi(d_1) S_0 - \Phi(d_2) Ke^{-rT},
\end{equation}
with
\begin{eqnarray}
d_1&=& \frac{1}{\sigma \sqrt{T}} \left[\log \left( \frac{S_0}{K} \right) + \left(r+\frac{\sigma^2}{2} \right) T \right], \\
d_2&=& d_1 - \sigma \sqrt T,
\end{eqnarray}
and the cumulative distribution function of the normal distribution $p(x)$,
\begin{equation}
\Phi(x) =  \int_{-\infty}^x dy\ p(y) := \frac{1}{\sqrt {2\pi}} \int_{-\infty}^x dy\ e^{-\frac{y^2}{2}}.
\end{equation}
\end{result}

In the case of complex payoff functions  and/or complex asset price
dynamics, options prices cannot be solved analytically and one often resorts to Monte Carlo evaluation. Nevertheless, analytical solutions as above can be used for benchmarking Monte Carlo simulations.
Finally, note that there is a dynamical equation of motion for the options price $\Pi_t$ in the interval $0\leq t \leq T$, which is given by a partial differential equation. Such an equation is used in practice for `hedging', i.e.~safeguarding during the time $0\leq t \leq T$ while the option is active against eventual payouts. In this work we do not consider such a differential equation but focus on the present-day options price $\Pi \equiv \Pi_0$.

\section{Classical Monte Carlo pricing}
\label{sectionMC}

We first provide a brief overview of Monte Carlo derivative pricing. Options are usually nonlinear functions applied to the outcomes of one or multiple underlying assets. The option payoff depends on the asset prices at specific time instances or is based on the paths of the asset prices. As discussed in the previous section, 
European options depend on the asset price at a single future time. If the
nonlinear function is piecewise linear, the option can be priced analytically,
similar to Result \ref{resultBS} and Appendix \ref{appendixBS}.
If there are multiple
independent Brownian processes underlying the dynamics, the price can often also be determined analytically. The need for Monte Carlo arises if the payoff function is nonlinear beyond piecewise linear or, for example, in cases when different asset prices are assumed to be correlated. 
Another class of options, called American options, allow the buyer to exercise the option at any point in time between the option start and the option maturity. Such options are related to the optimal stopping problem \cite{Follmer2004} and are also priced using Monte Carlo methods \cite{Longstaff2001}. 
Asian options depend on the average asset price during a time intervals and, if the averaging is arithmetic, may also require Monte Carlo \cite{Kemna1990}. 

The underlying asset prices are modeled via stochastic differential equations.
Often stock prices are taken to be log-normal stochastic processes, i.e.,~driven
by an exponentiated Brownian motion. In this case, when the parameters are constant, the stochastic differential
equation is exactly solvable. In other cases, such as when the parameters of the
model such as the volatility itself follow a
stochastic differential equation, the asset price dynamics is usually not
analytically solvable. The price is then determined by sampling paths of the asset dynamics. Moreover, Brownian motions are fundamentally continuous and Gaussian with exponentially suppressed tails, features which are rarely observed in real markets. Further research has considered `fat-tailed' stochastic processes and Levy jump processes \cite{Schoutens2003}, which often also require Monte Carlo sampling. 

Monte Carlo pricing of financial derivatives proceeds in the following way. Assume that the risk-neutral probability distribution is known, or can be obtained from calibrating to market variables.
Sample from this risk-neutral probability distribution
a market outcome, compute the asset prices given that market outcome, then compute the option payoff given the asset prices. Averaging the payoff over multiple samples obtains an approximation of the derivative price. 
Assume a European option on a single benchmark asset and let the true option price be $\Pi$ and $\hat \Pi$ be the approximation obtained from $k$ samples. 
Assume that the random variable of the payoff $f(S_T)$ is bounded in variance, i.e.~$\mathbbm V[f(S_T)] \leq \lambda^2$. Then the probability 
that the price estimation $\hat \Pi$ is $\epsilon$ away from the true price
is determined  by Chebyshev's inequality \cite{Montanaro2015}
\begin{equation} 
\mathbbm P[\vert \hat \Pi - \Pi \vert \geq \epsilon] \leq \frac{\lambda^2}{k \epsilon^2}.
\end{equation}
For a constant success probability, we thus require 
\begin{equation}\label{eqStepsClassical}
k= \Ord{ \frac{\lambda^2}{\epsilon^2}}
\end{equation} 
samples to estimate to additive error $\epsilon$. The task of the quantum algorithm will be to improve the $\epsilon$ dependence from 
$\epsilon^2$ to $\epsilon$, hence providing a quadratic speedup for a given error.

Before we discuss the quantum algorithm for derivative pricing, we show how to encode expectation values into a quantum algorithm and how to obtain the same $\epsilon$ dependency as the classical algorithm. 
We then discuss the quadratic speedup by using the fundamental quantum algorithm of amplitude estimation. 

\section{Quantum algorithm for Monte Carlo}
\label{sectionQuantumMC}

We first discuss generically the quantum algorithms to measure an expectation value and to obtain a quadratic improvement in the number of measurements \cite{Brassard2002,Montanaro2015,Xu2018}. See Appendix~\ref{appendixQIntro} for a brief introduction to the neccessary elements of quantum mechanics. In the following sections, we then specialize to European and Asian options. 
Assume we are given an algorithm $\mathcal A$ on $n$ qubits (the subsequent discussion can also be generalized to measuring only a
subset of qubits \cite{Montanaro2015}). 
When measuring the $n$
qubits, the algorithm produces the $n$-bit string result $x$ with probability $\vert
a_x\vert^2$. In addition let $v(x)$ be a function 
$ v(x): \{0,1\}^n \to \mathbbm R$ mapping from $n$-bit strings to reals. Here, $v(\mathcal A)$ denotes the random variable specified by the algorithm $\mathcal A$ and the function $v(x)$.
The task is to obtain the expectation value 
\begin{equation}
\mathbbm E[v(\mathcal A)] := \sum_{x=0}^{2^n-1} \vert a_x\vert^2 v(x) .
\end{equation}
In addition, assume we can implement a rotation onto an ancilla qubit,
\begin{equation} \label{eqOptionRotation}
\mathcal R \ket x \ket 0 = \ket x (\sqrt{1-v(x)} \ket 0 +\sqrt{v(x)} \ket 1).
\end{equation} 

These elements are now combined into a simple quantum algorithm to obtain the expectation value.
First apply the algorithm $\mathcal A$:
\begin{equation}
\mathcal A \ket {0^n} = \sum_{x=0}^{2^n-1} a_x \ket x ,
\end{equation}
where $\ket {0^n}$ denotes the $n$ qubit register with all qubits in the state $\ket 0$. Then perform the rotation of an ancilla via $\mathcal R$
\begin{eqnarray}
&\sum_{x=0}^{2^n-1}  a_x \ket x \ket{0 } \\ &\to \sum_{x=0}^{2^n-1} a_x \ket x (\sqrt{1-v(x)} \ket 0 +\sqrt{v(x)} \ket 1) =: \ket \chi. \nonumber
\end{eqnarray}
Combining the two operations defines a unitary $\mathcal F$ and the resulting state $\ket \chi$
\begin{equation}
\mathcal F \ket{0^{n+1}} := \mathcal R \mathcal (\mathcal A\otimes \mathcal I_2) \ket{0^{n+1}} \equiv \ket \chi.
\end{equation}
Here, $\mathcal I_d$ is the $d$-dimensional identity operator.
Measuring the ancilla in the state $\ket 1$ obtains as the success probability the expectation value
\begin{equation}
\mu:=\bra \chi \left (\mathcal I_{2^n} \otimes \ket 1 \bra 1 \right ) \ket \chi = \sum_{x=0}^{2^n-1} \vert a_x \vert^2 v(x) \equiv \mathbbm E[v(\mathcal A)]. 
\end{equation}
This success probability can be obtained by repeating the procedure $t$ times 
and collecting the clicks for the $\ket 1$ state as a fraction of the total measurements. 
The variance is $\epsilon^2= \frac{\mu(1-\mu)}{t}$ from the Bernoulli distribution, i.e.~the standard deviation
is $\epsilon= \sqrt{\frac{\mu(1-\mu)}{t}}$. Hence, the experiment has to 
be repeated 
\begin{equation} \label{eqStepsQuantumNaive}
t = \Ord{ \frac{\mu(1-\mu)}{\epsilon^2} }
\end{equation}
times for a given accuracy $\epsilon$. This quadratic dependency in $\epsilon$ is analogous to the classical Monte Carlo dependency Eq.~(\ref{eqStepsClassical}). Obtaining a quadratic speedup for the number of repetitions is the core task of amplitude estimation. 

\begin{figure*}
\begin{center}
\begin{minipage}{0.09\textwidth}
\flushleft \vspace{-0.0cm}\hspace{-1.0cm} \bf{$\qquad$(a)} \vspace{-0.35cm}\\
$$
\Qcircuit @C=1em @R=.0001em {
& \multigate{5}{Q} & \qw \\
& \ghost{Q} & \qw \\
& \ghost{Q} & \qw \\
& \ghost{Q} & \qw \\
& \ghost{Q} & \qw \\
& \ghost{Q} & \qw \\
}
$$
\end{minipage}
\begin{minipage}{0.04\textwidth}
\vspace{0.65cm}
$\Leftrightarrow$
\end{minipage}
\begin{minipage}{0.46\textwidth}
\vspace{0.3cm}
$$
\Qcircuit @C=1em @R=0.18em {
& \multigate{5}{\mathcal{F}} & \multigate{5}{\mathcal Z} & \multigate{5}{\mathcal{F}^{\dagger}} & \multigate{5}{\mathcal V} & \multigate{5}{\mathcal{F}} & \multigate{5}{\mathcal Z} & \multigate{5}{\mathcal{F}^{\dagger}} & \multigate{5}{\mathcal V} & \qw \\
& \ghost{\mathcal{F}} & \ghost{\mathcal Z} & \ghost{\mathcal{F}^{\dagger}} & \ghost{V} & \ghost{\mathcal{F}} & \ghost{\mathcal Z} & \ghost{\mathcal{F}^{\dagger}} & \ghost{V} & \qw \\
& \ghost{\mathcal{F}} & \ghost{\mathcal Z} & \ghost{\mathcal{F}^{\dagger}} & \ghost{V} & \ghost{\mathcal{F}} & \ghost{R} & \ghost{\mathcal{A}^{\dagger}} & \ghost{V} & \qw \\
& \ghost{\mathcal{F}} & \ghost{\mathcal Z} & \ghost{\mathcal{F}^{\dagger}} & \ghost{V} & \ghost{\mathcal{F}} & \ghost{R} & \ghost{\mathcal{A}^{\dagger}} & \ghost{V} & \qw \\
& \ghost{\mathcal{F}} & \ghost{\mathcal Z} & \ghost{\mathcal{F}^{\dagger}} & \ghost{V} & \ghost{\mathcal{F}} & \ghost{R} & \ghost{\mathcal{A}^{\dagger}} & \ghost{V} & \qw \\
& \ghost{\mathcal{F}} & \ghost{\mathcal Z} & \ghost{\mathcal{F}^{\dagger}} & \ghost{V} & \ghost{\mathcal{F}} & \ghost{R} & \ghost{\mathcal{A}^{\dagger}} & \ghost{V} & \qw \\
}
$$
\end{minipage}
\begin{minipage}{0.39\textwidth}
\flushleft \vspace{0.3cm}\hspace{-0.6cm} \bf{$\qquad$(b)} \vspace{0.0cm}\\
\includegraphics[width=\columnwidth]{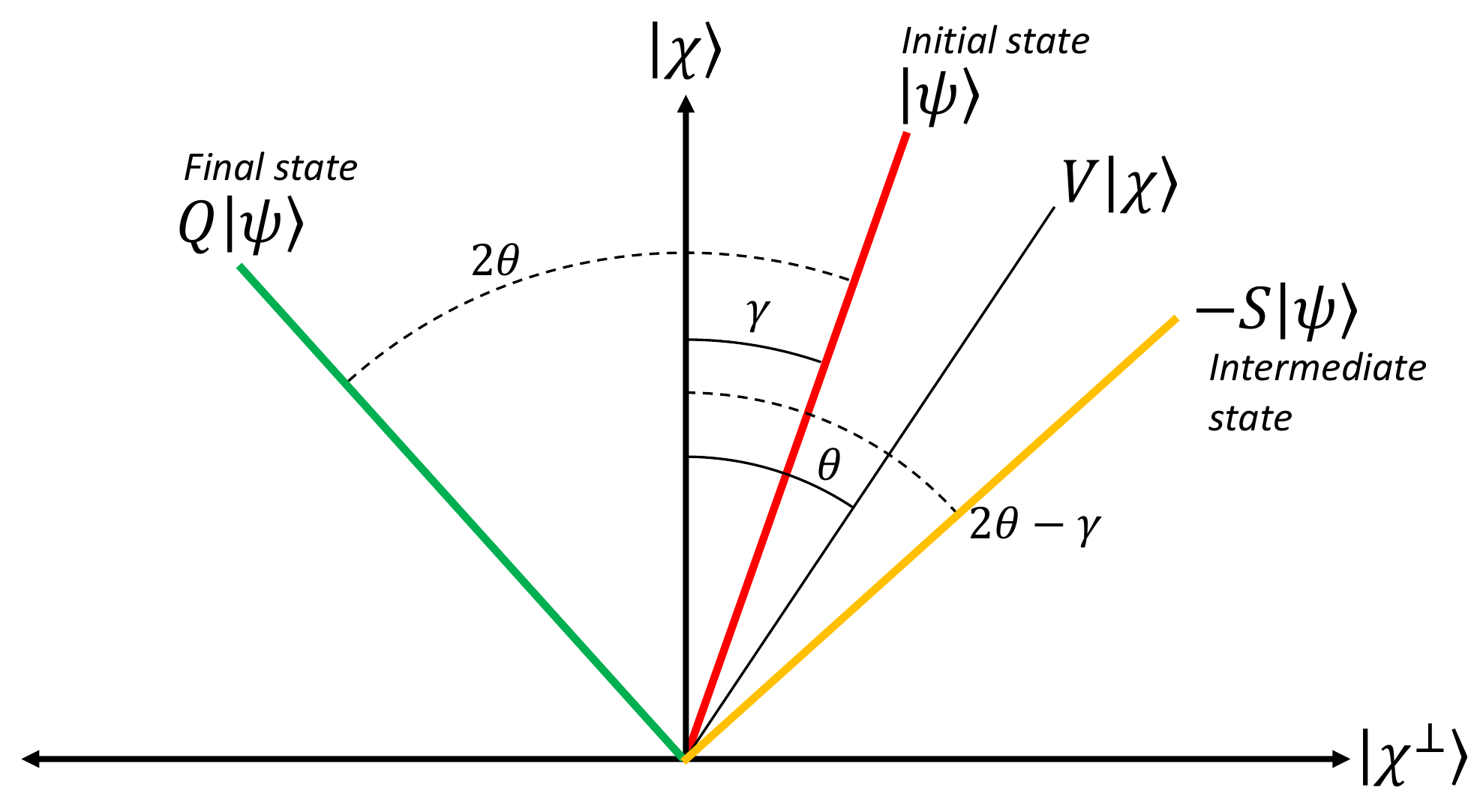}
\end{minipage}
\begin{minipage}{0.52\textwidth}
\flushright {\bf(c)$\qquad$}
\vspace{-0.5cm}
$$
\Qcircuit @C=1em @R=.3em {
& & \qquad \mbox{Preparing $\ket{\chi}$} & & & & \quad\quad \mbox{controlled amplitude amplification} \\
\\
\\
& \lstick{\ket{0}} & \multigate{5}{\mathcal A} & \multigate{6}{\mathcal R} & \multigate{6}{\mathcal Q} & \multigate{6}{\mathcal Q^{2}} & \qw & \ldots & & \multigate{6}{\mathcal Q^{2^{n-1}}} & \qw & \qw \\
& \lstick{\ket{0}} & \ghost{\mathcal A} & \ghost{\mathcal R} & \ghost{\mathcal Q} & \ghost{\mathcal Q^{2}} & \qw & \ldots & & \ghost{\mathcal Q^{2^{n-1}}} & \qw & \qw \\
& \lstick{\ket{0}} & \ghost{\mathcal A} & \ghost{\mathcal R} & \ghost{\mathcal Q} & \ghost{\mathcal Q^{2}} & \qw & \ldots & & \ghost{\mathcal Q^{2^{n-1}}} & \qw & \qw \\
& \lstick{\ket{0}} & \ghost{\mathcal A} & \ghost{\mathcal R} & \ghost{\mathcal Q} & \ghost{\mathcal Q^{2}} & \qw & \ldots & & \ghost{\mathcal Q^{2^{n-1}}} & \qw & \qw \\
& \lstick{\ket{0}} & \ghost{\mathcal A} & \ghost{\mathcal R} & \ghost{\mathcal Q} & \ghost{\mathcal Q^{2}} & \qw & \ldots & & \ghost{\mathcal Q^{2^{n-1}}} & \qw & \qw \\
& \lstick{\ket{0}} & \ghost{\mathcal A} & \ghost{\mathcal R} & \ghost{\mathcal Q} & \ghost{\mathcal Q^{2}} & \qw & \ldots & & \ghost{\mathcal Q^{2^{n-1}}} & \qw & \qw \\
& \lstick{\ket{0}} & \qw & \ghost{W} & \ghost{\mathcal Q} & \ghost{\mathcal Q^{2}} & \qw & \ldots & & \ghost{\mathcal Q^{2^{n-1}}} & \qw & \qw & \\
\\
\\
\\
\\
\\
\\
& \lstick{\ket{0}} & \gate{\mathcal H} & \qw & \ctrl{-7} & \qw & \qw & \qw & \qw & \qw & \multigate{9}{QFT^{-1}} & \meter \\
& \lstick{\ket{0}} & \gate{\mathcal H} & \qw & \qw & \ctrl{-8} & \qw & \qw & \qw & \qw & \ghost{QFT^{-1}} &  \meter \\
\\
\\
& \lstick{\vdots} & & & & & & \ddots & & & & \vdots \\
\\
\\
\\
\\
& \lstick{\ket{0}} & \gate{\mathcal H} & \qw & \qw & \qw & \qw & \qw & \qw & \ctrl{-16} & \ghost{QFT^{-1}} &  \meter \gategroup{4}{2}{10}{4}{0.7em}{.} \gategroup{4}{5}{26}{10}{.7em}{--} \\
}
$$
\end{minipage}
\begin{minipage}{0.47\textwidth}
\flushleft \vspace{0.0cm} \bf{$\qquad$(d)} \vspace{0.3cm}\\
\includegraphics[width=\columnwidth]{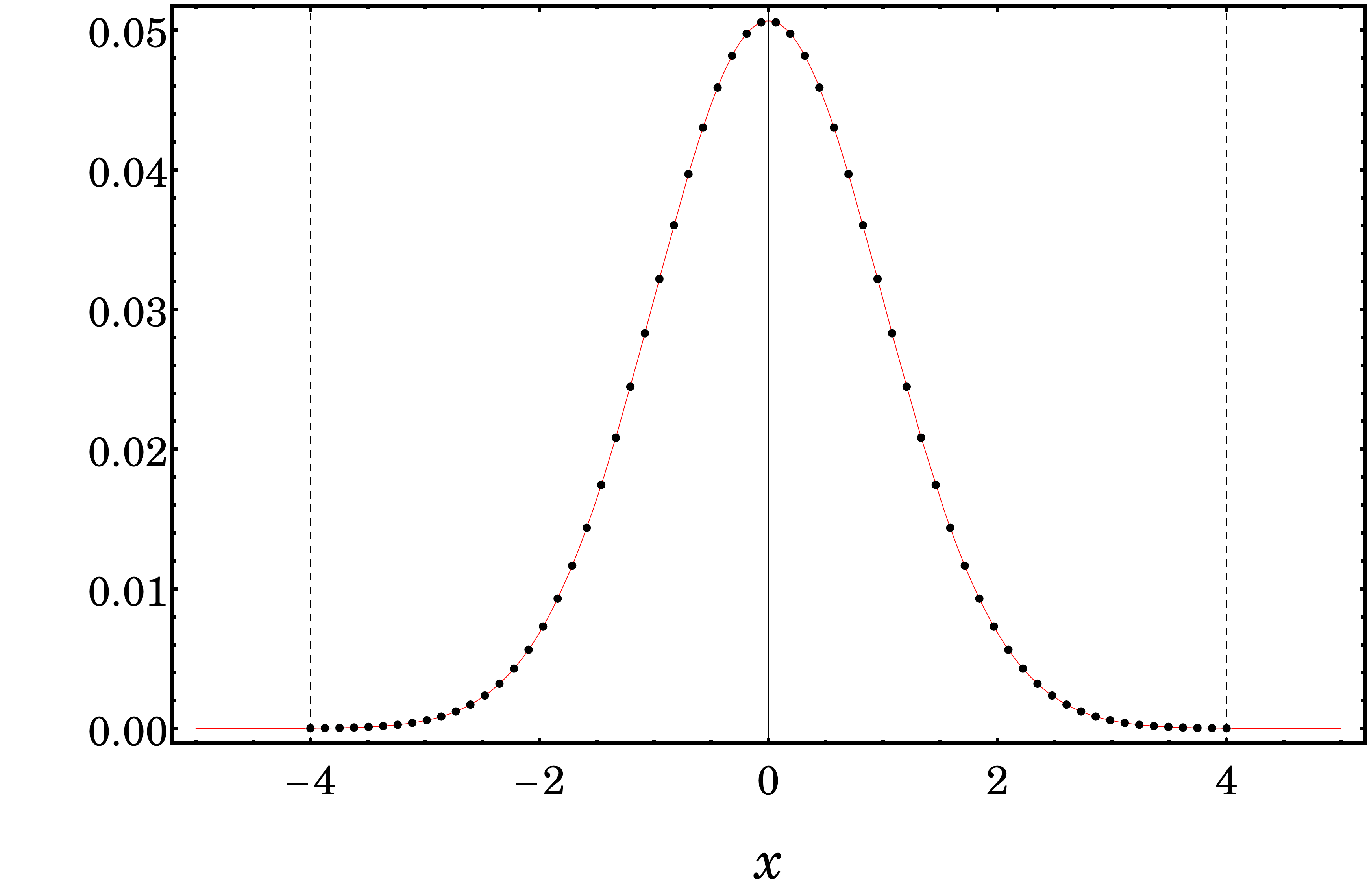}
\end{minipage}
\end{center}
\caption{Using amplitude estimation for the quantum Monte Carlo pricing of financial derivatives. {\bf(a)} The $n+1$ qubit phase estimation unitary is written in terms of $\mathcal{F}:= \mathcal R (\mathcal A \otimes \mathcal I_{2})$, and the simple rotation unitaries $\mathcal Z:=\mathcal I_{2^{n+1}}-2\ket{0^{n+1}}\bra{0^{n+1}}$ and $\mathcal V:=\mathcal I_{2^{n+1}}-2 \mathcal I_{2^{n}}\otimes \ket{1}\bra{1}$. {\bf(b)} A visualization of the action of $\mathcal{Q} := \mathcal{U}\mathcal{S}$, with $\mathcal{S} = \mathcal{V}\mathcal{U}\mathcal{V}$ and $\mathcal U = \mathcal{F}\mathcal{Z}\mathcal{F}^{\dagger}$, on an arbitrary state $\ket{\psi}$ (red) in the span of $\ket{\chi}$ and $\mathcal{V}\ket{\chi}$. First, the action of $-\mathcal S$ on $\ket{\psi}$ is to reflect along $\mathcal V\ket{\chi}$, resulting in the intermediate $-\mathcal S\ket{\psi}$ (amber). Then, $-\mathcal U$ acts on $-\mathcal S\ket{\psi}$ by reflecting along $\ket{\chi}$. The resultant state $\mathcal Q\ket{\psi}$ (green) has been rotated anticlockwise by an angle $2\theta$ in the hyperplane of $\ket{\chi}$ and $\ket{\chi^{\perp}}$. {\bf(c)}~The phase estimation circuit. Here, $\mathcal A$ encodes the randomness by preparing a superposition in $\ket{x}$, while $\mathcal R$ encodes the random variable into the $\ket{1}$ state of an ancilla qubit according to Eq.~(\ref{eqOptionRotation}). The output after both steps is the multiqubit state $\ket{\chi}$. Amplitude estimation then proceeds by invoking phase estimation to encode the rotation angle $\theta$ in a register of quantum bits that are measured to obtain the estimate $\hat \theta$. {\bf(d)} For pricing a European call option, the superposition prepared by $\mathcal A$ (or equivalently $\mathcal G$ in Eq.~\eqref{eqGroverState}) is a discretization of the normal distribution in $x$ with a fixed cutoff (e.g. $c=4$), approximating the Brownian motion of the underlying asset.
In this case, $\mathcal R$ encodes the call option payoff in Eq.~(\ref{Eq:CallPay}). }
\label{figure1}
\end{figure*}

The main tool to obtain a quantum speedup is to connect the desired expectation value to an eigenfrequency of an oscillating quantum system and then use another quantum degree of freedom (such as another register of qubits) as a probe to extract the eigenfrequency.
Note that we can slightly redefine the quantity being measured. Define the unitary 
\begin{equation}
\mathcal V:= \mathcal I_{2^{n+1}} -2\mathcal I_{2^n}\otimes \ket 1 \bra 1,
\end{equation}
for which $\mathcal V= \mathcal V^\dagger$ and $\mathcal V^2 = \mathcal I_{2^{n+1}}$. A measurement of $\mathcal V$ on $\ket \chi$ obtains $\bra \chi \mathcal V \ket \chi = 1-2 \mathbbm \mu$. From this measurement we can extract the desired expectation value. 

Any quantum state in the $(n+1)$-qubit Hilbert space can be expressed as a linear combination of $\ket \chi$ and a specific orthogonal complement $ \ket{\chi^\perp}$. Thus, we can express 
$\mathcal V \ket \chi = \cos(\theta/2) \ket \chi + e^{i\phi} \sin(\theta/2) \ket{\chi^\perp}$, with the angles $\phi$ and $\theta$. Note that our expectation value can be retrieved via 
\begin{equation}\label{eqMeanToAngle}
1-2 \mu = \cos(\theta/2).
\end{equation}
The task becomes to measure $\theta$. We now define a transformation $\mathcal Q$ that encodes $\theta$ in its eigenvalues.
First, define the unitary reflection
\begin{equation}
\mathcal U:= \mathcal I_{2^{n+1}} -2 \ket \chi \bra \chi,
\end{equation}
which acts as $\mathcal U \ket \chi = - \ket \chi$ and  $\mathcal U \ket {\chi^\perp} = \ket {\chi^\perp}$ for any orthogonal state. 
Note that $-\mathcal U$ reflects across $\ket \chi$ and leaves $\ket {\chi}$ itself unchanged.
This unitary can be implemented as $\mathcal U=\mathcal F \mathcal Z \mathcal F^\dagger$, where $\mathcal F^\dagger $ is the inverse of $\mathcal F$ and $\mathcal Z :=\mathcal I_{2^{n+1}}-2 \ket{0^{n+1}} \bra{0^{n+1}}$ is the reflection of the computational zero state. 
Similarly, define the unitary 
\begin{equation}
\mathcal S:=\mathcal I_{2^{n+1}}-2 \mathcal V \ket \chi \bra \chi \mathcal V \equiv \mathcal  V \mathcal U \mathcal V.
\end{equation}
Note that $-\mathcal S$ reflects across $\mathcal V\ket \chi$ and leaves $\mathcal V\ket \chi$ itself unchanged. 
The transformation 
\begin{equation}
\mathcal Q := \mathcal U \mathcal S = \mathcal U \mathcal V \mathcal U \mathcal V
\end{equation}
performs a rotation by an angle $2\theta$ in the two-dimensional Hilbert space spanned by $\ket \chi$ and $ \mathcal V\ket{\chi }$. Figure~\ref{figure1} (a) shows the breakdown of $\mathcal{Q}$ into its constituent unitaries and Fig.~\ref{figure1}~(b) illustrates how $\mathcal{Q}$ imprints a phase of $2\theta$ via the reflections just discussed. The eigenvalues of $\mathcal Q$ are $e^{\pm i \theta}$ with corresponding eigenstates $\ket{\psi_\pm}$ \cite{Xu2018}. The task is to resolve these eigenvalues via phase estimation, as shown in Fig.~\ref{figure1} (c). 

For phase estimation of $\theta$~\cite{nielsen2002quantum}, we require the conditional application of the operation $\mathcal Q$. 
Concretely, we require
\begin{equation} \label{eqQc}
\mathcal Q^c: \ket j \ket \psi \to \ket j \mathcal Q^j \ket \psi, 
\end{equation}
for an arbitrary $n$ qubit state $\ket \psi$. Phase estimation then proceeds in the following way, see Fig.~\ref{figure1} (c). Take a copy of $\ket \chi$ by applying $\mathcal{F}$ to a register of qubits in $\ket{0^{n+1}}$. Then prepare an additional $m$-qubit register in the uniform superposition via the Hadamard operation $\mathcal H$
\begin{equation}
\mathcal H^{\otimes m} \ket{0^m} \ket \chi = \frac{1}{\sqrt{2^m}} \sum_{j=0}^{2^m-1} \ket j \ket \chi.
\end{equation}
Then perform the controlled operation $\mathcal Q^c$ to obtain
\begin{equation} \label{eqControlledQ}
\frac{1}{\sqrt{2^m}}  \sum_{j=0}^{2^m-1} \ket j \mathcal Q^j \ket \chi.
\end{equation}
One can show that  
$\ket \chi = \frac{1}{\sqrt 2} \left( \ket {\psi_+}  + \ket {\psi_-} \right)$ is the expansion of $\ket \chi$ into the two eigenvectors of $\mathcal Q$ corresponding to the eigenvalues $e^{\pm i \theta}$ \cite{Xu2018}.  An inverse quantum Fourier transformation applied to Eq.~(\ref{eqControlledQ})
prepares the state
\begin{equation}
\sum_{x=0}^{2^m-1} \alpha_+(x) \ket x \ket {\psi_+} +  \alpha_-(x) \ket x  \ket{\psi_-}.
\end{equation}
The $\vert \alpha_\pm(x)\vert^2$ are peaked where $x/2^m= \pm \hat \theta$ is an $m$-bit approximation to $\pm \theta$. 
Hence, measurement of the $\ket x $ register will retrieve the approximations $ \pm \hat \theta$.
The detailed steps are shown in Appendix \ref{appendixPhaseEstimation}.

These results can be formalized with the following theorems.
\begin{theorem} [Amplitude estimation \cite{Brassard2002}] \label{theoremAmpEst}
There is a quantum algorithm called amplitude
estimation which takes as input: one copy of a quantum state $\ket \chi$, a unitary transformation $\mathcal U = \mathcal I -2 \ket \chi \bra \chi$, a unitary transformation $\mathcal V =\mathcal I -2P$ for some projector $P$, and an integer $t$. The algorithm
outputs $\hat a$, an estimate of $a =  \bra \chi P \ket \chi$, such that
$$\vert \hat a - a \vert \leq 2 \pi \frac{\sqrt{a(1-a)} }{t} + \frac{\pi^2}{t^2}$$
with probability at least $8/\pi^2$, using $\mathcal U$ and $\mathcal V$ $t$ times each.
\end{theorem}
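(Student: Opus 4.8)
The plan is to identify amplitude estimation with phase estimation of the operator $\mathcal Q = \mathcal U \mathcal V \mathcal U \mathcal V$, and then to convert the resulting phase estimate into an estimate of $a = \bra\chi P \ket\chi$, controlling the error with an elementary trigonometric argument. Following the main text, let $\theta\in[0,\pi]$ be the angle for which $\mathcal Q$ has eigenvalues $e^{\pm i\theta}$ on the relevant two-dimensional subspace, related to the target by $a = \sin^2(\theta/2)$ (equivalently $\bra\chi\mathcal V\ket\chi = 1-2a = \cos\theta$). First I would establish this geometry. Writing $\ket\chi = \sqrt a\,\ket{\psi_1} + \sqrt{1-a}\,\ket{\psi_0}$ with $\ket{\psi_1}\propto P\ket\chi$ and $\ket{\psi_0}\in\ker P$, the real plane $\mathcal K := \mathrm{span}\{\ket{\psi_0},\ket{\psi_1}\} = \mathrm{span}\{\ket\chi,\mathcal V\ket\chi\}$ is (for $0<a<1$) invariant under $\mathcal U = \mathcal I - 2\ket\chi\bra\chi$ and $\mathcal V = \mathcal I - 2P$, on which $-\mathcal U$ and $-\mathcal V$ act as planar reflections fixing $\ket\chi$ and $\ket{\psi_1}$ respectively. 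A product of two planar reflections is a rotation, so $\mathcal Q$ restricts to $\mathcal K$ as a rotation; computing the angle between the two reflection axes fixes it, yielding the eigenvalues $e^{\pm i\theta}$ on $\mathcal K$ with orthonormal eigenvectors $\ket{\psi_\pm}$ and, by expanding $\ket\chi$ in that basis, $\ket\chi = \frac{1}{\sqrt 2}(\ket{\psi_+} + \ket{\psi_-})$. These are precisely the facts attributed to \cite{Xu2018,Brassard2002} in the main text, so only the short linear-algebra checks remain.

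Second I would run textbook phase estimation (Fig.~\ref{figure1}(c)) on the controlled operator $\mathcal Q^c$ of Eq.~(\ref{eqQc}), using $\Ord{\log t}$ control qubits (hence $\Ord{t}$ controlled applications of $\mathcal Q$, and thus $\Ord{t}$ uses of $\mathcal U$ and of $\mathcal V$ as in the statement; take $t$ a power of two for concreteness), applied to the input $\ket\chi$ prepared by $\mathcal F$. Since $\ket\chi$ is the balanced superposition of the two eigenvectors with eigenphases $\pm\theta$, the standard analysis of the truncated Fourier sum — in the generic case where $\theta/2\pi$ is not an exact multiple of $1/t$ — shows that the inverse quantum Fourier transform followed by measurement returns an integer $y\in\{0,\dots,t-1\}$ with $\hat\omega := y/t$ estimating one of $\pm\theta/2\pi$, and that a nearest grid point is returned with probability at least $8/\pi^2$. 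Hence $\hat\theta := 2\pi\hat\omega$ (folded into $[0,\pi]$ when $\hat\omega>1/2$) satisfies $\vert\hat\theta - \theta\vert \le 2\pi/t$ with probability at least $8/\pi^2$; the sign ambiguity is harmless because the final estimate depends on $\hat\theta$ only through $\sin^2(\hat\theta/2)$.

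Third comes the error propagation. Output $\hat a := \sin^2(\hat\theta/2)$ and set $e := (\hat\theta - \theta)/2$, so $\vert e\vert \le \pi/t$. The identity $\sin^2 A - \sin^2 B = \sin(A+B)\sin(A-B)$ with $A = \hat\theta/2$ and $B = \theta/2$ gives $\hat a - a = \sin(\theta + e)\sin e$; since $\sin\theta = 2\sin(\theta/2)\cos(\theta/2)$ and $a = \sin^2(\theta/2)$, we have $\vert\sin(\theta+e)\vert \le \vert\sin\theta\vert + \vert e\vert = 2\sqrt{a(1-a)} + \vert e\vert$, whence
\[
\vert \hat a - a\vert \;\le\; \bigl(2\sqrt{a(1-a)} + \vert e\vert\bigr)\,\vert e\vert \;\le\; 2\pi\,\frac{\sqrt{a(1-a)}}{t} + \frac{\pi^2}{t^2},
\]
which is the claimed bound, and the success probability $8/\pi^2$ carries over verbatim from the previous step.

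The genuinely substantive ingredient is the one fed into the second step: the $8/\pi^2$ success probability together with the $1/t$ grid-spacing guarantee for an eigenphase $\theta$ that need not be a dyadic fraction of $2\pi$. I would import this from the phase-estimation analysis of \cite{Brassard2002,nielsen2002quantum} rather than reproving it; it is also this step that pins down the explicit constants $2\pi$ and $\pi^2$ (rather than mere $\Ord{1/t}$ and $\Ord{1/t^2}$ factors). By contrast, the first and third steps — the planar-reflection geometry and the concluding inequality — are mechanical, so I expect essentially all of the real work, and all the delicacy about constants and failure probability, to reside in the phase-estimation bound.
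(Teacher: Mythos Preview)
The paper does not prove Theorem~\ref{theoremAmpEst}; it quotes it from \cite{Brassard2002} as a black box, with Section~\ref{sectionQuantumMC} and Appendix~\ref{appendixPhaseEstimation} supplying only the two-dimensional rotation geometry and the phase-estimation mechanics as context. Your sketch is exactly the standard Brassard--H{\o}yer--Mosca--Tapp argument and is correct: in particular, your trigonometric step $\sin^2(\hat\theta/2)-\sin^2(\theta/2)=\sin(\theta+e)\sin e$ followed by $|\sin(\theta+e)|\le 2\sqrt{a(1-a)}+|e|$ is precisely how the constants $2\pi$ and $\pi^2$ arise in \cite{Brassard2002}.

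One small bookkeeping slip: you adopt the paper's operator $\mathcal Q=\mathcal U\mathcal V\mathcal U\mathcal V$ but pair it with the relation $a=\sin^2(\theta/2)$, i.e.\ $1-2a=\cos\theta$. The paper's $\mathcal Q$ is the \emph{square} of the usual Grover iterate $-\mathcal U\mathcal V$, and accordingly the paper records $1-2\mu=\cos(\theta/2)$ in Eq.~\eqref{eqMeanToAngle}, not $\cos\theta$. Your angle convention is the one matching the single-step iterate $-\mathcal U\mathcal V$, which is in fact what you want here since the theorem counts $t$ uses of $\mathcal U$ and of $\mathcal V$ each; simply replace $\mathcal Q=\mathcal U\mathcal V\mathcal U\mathcal V$ by $\mathcal Q=-\mathcal U\mathcal V$ (or redo the counting with the paper's doubled angle and $t/2$ iterations) and the constants line up with the stated bound.
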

This theorem can be used to estimate expectation values. Given the cosine relationship in Eq.~(\ref{eqMeanToAngle}), it is natural to start with $[0,1]$ bounded expectation values. 
\begin{theorem} [Mean estimation for {$[0,1]$} bounded functions \cite{Montanaro2015}] \label{theoremMean01}
Let there be given a quantum circuit $\mathcal A$ on $n$ qubits. 
Let $v(\mathcal A)$ be the random variable that maps to $v(x) \in [0,1]$ when the bit string $x$ is measured as the output of $\mathcal A$.
Let $\mathcal R$ be defined as 
$$
\mathcal R \ket x \ket 0 = \ket x (\sqrt{1-v(x)} \ket 0 - \sqrt{v(x)} \ket 1).
$$
Let $\ket \chi$ be defined as $\ket \chi = \mathcal R (\mathcal A \otimes \mathcal I_2) \ket{0^{n+1}}$.
Set $\mathcal U = \mathcal I_{2^{n+1}} - 2 \ket \chi \bra \chi $. 
There exists a quantum algorithm that uses
$\Ord{\log 1/\delta}$ copies of the state $\ket \chi$, uses $\mathcal U$ for a number of times proportional to $\Ord{t \log 1/\delta}$ and outputs an estimate $\hat \mu$ such that
$$\vert \hat \mu- E[v(\mathcal A)] \vert \leq C \left( \frac{\sqrt{\mathbbm E [ v(\mathcal A)]}}{t} + \frac{1}{t^2}\right) $$
with probability at least $1-\delta$, where $C$ is a universal constant. In particular, for any fixed $\delta>0$ and any $\epsilon$
such that $0< \epsilon\leq 1$, to produce an estimate $\hat \mu$ such that with probability at least $1-\delta$, $\vert \hat \mu - E [ v(\mathcal A)] \vert \leq \epsilon  E [ v(\mathcal A)]$,
it suffices to take $t=\Ord{(1/(\epsilon \sqrt{E [ v(\mathcal A)]} )}$. To achieve 
$\vert \hat \mu - E [ v(\mathcal A)] \vert \leq \epsilon$ with probability at least 
$1-\delta$, it suffices to take $t=\Ord{1/\epsilon}$.
\end{theorem}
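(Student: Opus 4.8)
The plan is to obtain this as a corollary of the amplitude estimation guarantee in Theorem~\ref{theoremAmpEst}, together with a standard median-of-repetitions step to amplify the success probability from the constant $8/\pi^2$ up to $1-\delta$. First I would assemble, from the given circuits, the state-preparation unitary $\mathcal F = \mathcal R(\mathcal A\otimes \mathcal I_2)$, so that one application of $\mathcal F$ to $\ket{0^{n+1}}$ yields one copy of $\ket\chi$, and the reflection $\mathcal U = \mathcal F\mathcal Z\mathcal F^\dagger = \mathcal I_{2^{n+1}}-2\ket\chi\bra\chi$ exactly as in the generic construction above. Take the projector $P = \mathcal I_{2^n}\otimes\ket 1\bra 1$ and $\mathcal V = \mathcal I_{2^{n+1}}-2P$. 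Since the sign in the definition of $\mathcal R$ does not affect the amplitudes $|a_x|^2$, one has $a := \bra\chi P\ket\chi = \sum_x |a_x|^2 v(x) = \mathbbm E[v(\mathcal A)] =: \mu$, so estimating $a$ is exactly estimating the desired mean.

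Feeding $\ket\chi$, $\mathcal U$, $\mathcal V$ and the integer $t$ into Theorem~\ref{theoremAmpEst} returns $\hat a$ with $|\hat a-\mu|\le 2\pi\sqrt{\mu(1-\mu)}/t + \pi^2/t^2$ with probability at least $8/\pi^2$, using $\mathcal U$ and $\mathcal V$ exactly $t$ times each and a single copy of $\ket\chi$. Because $\mu\in[0,1]$ we have $\sqrt{\mu(1-\mu)}\le\sqrt\mu$, so this simplifies to $|\hat a-\mu|\le C_0\big(\sqrt\mu/t + 1/t^2\big)$ with $C_0=\pi^2$, which is the claimed form for a single run.

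Next I would amplify: repeat the above independently $N=\Ord{\log 1/\delta}$ times, each run consuming a fresh copy of $\ket\chi$ and $\Ord{t}$ applications of $\mathcal U$ (and $\mathcal V$), and output the median $\hat\mu$ of the $N$ estimates $\hat a^{(1)},\dots,\hat a^{(N)}$. Since each run independently lands in the interval $[\mu - C_0(\sqrt\mu/t+1/t^2),\,\mu + C_0(\sqrt\mu/t+1/t^2)]$ with probability $8/\pi^2 > 1/2$, a Chernoff/Hoeffding bound shows that for a suitable constant hidden in $N$ the median lies in that interval except with probability at most $\delta$. This yields the main estimate with $C$ a universal constant, using $\Ord{\log 1/\delta}$ copies of $\ket\chi$ and $\Ord{t\log 1/\delta}$ uses of $\mathcal U$, as claimed.

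Finally, I would specialize. For the relative-error statement I want $C(\sqrt\mu/t + 1/t^2)\le\epsilon\mu$: choosing $t=\Ord{1/(\epsilon\sqrt\mu)}$ makes the first term $\Ord{\epsilon\mu}$, and then $1/t^2 = \Ord{\epsilon^2\mu}\le\Ord{\epsilon\mu}$ since $\epsilon\le 1$, so both terms are controlled. For the additive-error statement, $\mu\le 1$ gives $\sqrt\mu/t + 1/t^2\le 2/t$ for $t\ge 1$, so $t=\Ord{1/\epsilon}$ suffices. The only genuinely delicate points are (i) checking that the quadratic $1/t^2$ term of amplitude estimation does not spoil the relative-error bound — this is precisely where the hypothesis $0<\epsilon\le 1$ is used — and (ii) confirming that the median amplification multiplies the cost only by $\log 1/\delta$ rather than by any power of $1/\epsilon$; the rest is bookkeeping on top of Theorem~\ref{theoremAmpEst}.
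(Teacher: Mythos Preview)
Your proposal is correct and follows essentially the same route as the paper: the paper merely states that the theorem is a direct application of amplitude estimation (Theorem~\ref{theoremAmpEst}) together with median-of-repetitions boosting of the $8/\pi^2$ success probability to $1-\delta$ via Lemma~\ref{lemmaMedian}, and you have filled in exactly those details, including the bound $\sqrt{\mu(1-\mu)}\le\sqrt\mu$ and the specializations to relative and additive error.
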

This theorem is a direct application of amplitude estimation via Theorem \ref{theoremAmpEst}. The success probability of $8/\pi^2$ of amplitude estimation can be improved to $1-\delta$ by taking the median of multiple runs of Theorem \ref{theoremAmpEst}, see Appendix \ref{appendixPhaseEstimation}, Lemma \ref{lemmaMedian}.
Theorem \ref{theoremMean01} can be generalized to random variables with bounded variance as follows.
\begin{theorem}[Mean estimation with bounded variance \cite{Montanaro2015}] \label{theoremMeanL2}
Let there be given a quantum circuit $\mathcal A$. 
Let $v(\mathcal A)$ be the random variable corresponding to $v(x)$ when the outcome $x$ of $\mathcal A$ is measured, 
such that $\mathbbm V [v(\mathcal A)] \leq \lambda^2$. Let the accuracy be $\epsilon < 4 \lambda$.
Take $\mathcal U$ and $\ket \chi$ as in Theorem \ref{theoremMean01}.
There exists
a quantum algorithm that uses $\Ord{\log(\lambda/\epsilon) \log \log (\lambda/\epsilon)}$ copies
of $\ket \chi$ and uses $\mathcal U$ for a number of times $\Ord{(\lambda/\epsilon) \log^{3/2}(\lambda/\epsilon) \log \log (\lambda/\epsilon)}$ and estimates $\mathbbm E[v(\mathcal A) ]$ up to additive error $\epsilon$ with
success probability at least $2/3$.
\end{theorem}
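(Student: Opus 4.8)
\emph{Proof plan.} The strategy is to reduce Theorem~\ref{theoremMeanL2} to the $[0,1]$-bounded estimator of Theorem~\ref{theoremMean01} by truncation. First rescale: replacing $v$ by $v/\lambda$ and $\epsilon$ by $\epsilon/\lambda$, we may assume $\mathbbm V[v(\mathcal A)]\le 1$ and $0<\epsilon<4$. Then reduce to nonnegative variables: estimate $\mathbbm E[v(\mathcal A)]$ classically to additive error $1$ with $\Ord{1}$ samples of $\mathcal A$ (possible with high probability since $\mathbbm V[v(\mathcal A)]\le1$), obtaining a constant $\tilde m$, and set $w(x):=v(x)-\tilde m$, so that $\mathbbm E[w(\mathcal A)^2]=\mathbbm V[v(\mathcal A)]+(\mathbbm E[v(\mathcal A)]-\tilde m)^2\le 2$; note $\mathbbm E[v(\mathcal A)]=\tilde m+\mathbbm E[w(\mathcal A)]$ holds for any $\tilde m$, so the accuracy of $\tilde m$ affects only efficiency, not correctness. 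Splitting $w=w^{+}-w^{-}$ with $w^{\pm}:=\max(\pm w,0)$, one has $(w^{\pm})^{2}\le w^{2}$ pointwise, so $w^{\pm}/\sqrt2$ is nonnegative with second moment $\le1$, and its rotation $\mathcal R$ is obtained from that of $v$ by reversible arithmetic with the known constant $\tilde m$. It therefore suffices to estimate $\mathbbm E[u(\mathcal A)]$, for a nonnegative $u(\mathcal A)$ with $\mathbbm E[u(\mathcal A)^{2}]\le 1$, to additive error $\Theta(\epsilon)$ using $\tOrd{1/\epsilon}$ invocations of a circuit derived from $\mathcal A$.

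For this nonnegative subroutine, decompose $u = u_{\mathrm{lo}}+\sum_{k=0}^{K}u_{k}+u_{\mathrm{hi}}$ with $u_{\mathrm{lo}}:=u\,\mathbbm 1[u<1]\in[0,1]$, $u_{k}:=u\,\mathbbm 1[2^{k}\le u<2^{k+1}]$ (so $u_{k}/2^{k+1}\in[0,1]$), $u_{\mathrm{hi}}:=u\,\mathbbm 1[u\ge 2^{K+1}]$, and $K:=\lceil\log_{2}(4/\epsilon)\rceil$. Chebyshev's inequality gives $\mathbbm P[u\ge 2^{k}]\le\mathbbm E[u^{2}]/4^{k}\le 4^{-k}$, hence $\mathbbm E[u_{k}]\le 2^{k+1}\mathbbm P[u\ge 2^{k}]\le 2^{1-k}$ and $\mathbbm E[u_{\mathrm{hi}}]\le\sum_{k>K}2^{1-k}=2^{1-K}\le\epsilon/2$, so the high tail may be dropped. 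One estimates $\mathbbm E[u_{\mathrm{lo}}]$ directly by Theorem~\ref{theoremMean01} to additive error $\epsilon/4$, costing $\Ord{1/\epsilon}$ invocations. For each $k$ one estimates $q_{k}:=\mathbbm E[u_{k}/2^{k+1}]\in[0,4^{-k}]$ by Theorem~\ref{theoremMean01} with $t_{k}$ invocations; its guarantee $|\hat q_{k}-q_{k}|\le C(\sqrt{q_{k}}/t_{k}+1/t_{k}^{2})$ together with $\sqrt{q_{k}}\le 2^{-k}$ bounds the additive error in $\mathbbm E[u_{k}]=2^{k+1}q_{k}$ by $2C/t_{k}+2^{k+1}C/t_{k}^{2}$, which is $\le\delta_{k}$ as soon as $t_{k}=\Ord{1/\delta_{k}+2^{k/2}/\sqrt{\delta_{k}}}$.

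It remains to budget the per-slice errors. Taking $\delta_{k}=\epsilon/(4(K+1))$ gives $\sum_{k}\delta_{k}\le\epsilon/4$ and, because $2^{k}\le 2^{K}=\Ord{1/\epsilon}$ so that $2^{k}\delta_{k}=\Ord{1/K}\le 1$, the first term in $t_{k}$ dominates: $t_{k}=\Ord{1/\delta_{k}}=\Ord{K/\epsilon}$. Summing over the $K+1$ slices and adding the $u_{\mathrm{lo}}$ cost yields $\Ord{K^{2}/\epsilon}=\tOrd{1/\epsilon}$ invocations across $\Ord{K}$ calls to Theorem~\ref{theoremMean01}. Each call succeeds with probability $\ge 8/\pi^{2}$; boosting each to $\ge 1-\Ord{1/K}$ via the median of $\Ord{\log K}=\Ord{\log\log(1/\epsilon)}$ independent repetitions (Lemma~\ref{lemmaMedian}) and a union bound gives overall success $\ge 2/3$, at the cost of an extra $\Ord{\log\log(1/\epsilon)}$ factor in the number of invocations and of $\Ord{\log(1/\epsilon)\log\log(1/\epsilon)}$ copies of the prepared states. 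The output is $\hat\mu=\tilde m+\big(\hat q_{\mathrm{lo}}^{+}+\sum_{k}2^{k+1}\hat q_{k}^{+}\big)-\big(\hat q_{\mathrm{lo}}^{-}+\sum_{k}2^{k+1}\hat q_{k}^{-}\big)$ (the $\pm$ labels referring to the $w^{\pm}$ split), and undoing the rescaling by $\lambda$ gives additive error $\epsilon$. A slightly sharper, non-uniform choice of the $\delta_{k}$ and of the per-call success probabilities, with the constants tracked exactly, reproduces the stated $\Ord{(\lambda/\epsilon)\log^{3/2}(\lambda/\epsilon)\log\log(\lambda/\epsilon)}$ uses of $\mathcal U$ and $\Ord{\log(\lambda/\epsilon)\log\log(\lambda/\epsilon)}$ copies of $\ket\chi$ of \cite{Montanaro2015}.

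The \emph{main obstacle} is exactly this budgeting. Estimating the single truncated variable $\min(u,2^{K})/2^{K}\in[0,1]$ to additive accuracy $\epsilon/2^{K}$ would cost $\Ord{1/\epsilon^{2}}$, no better than classical Monte Carlo; the quadratic speedup is retained only because (i) Chebyshev forces the $k$-th dyadic slice to carry expectation $\lesssim 4^{-k}$, so only $K=\Ord{\log(\lambda/\epsilon)}$ slices matter and the tail is negligible, and (ii) Theorem~\ref{theoremMean01} costs $\Ord{1/(\epsilon\sqrt{\mathbbm E})}$ rather than $\Ord{1/\epsilon^{2}}$, so that although slice $k$ enters the sum with weight $2^{k+1}$ its cost $\Ord{1/\delta_{k}}$ is independent of $k$ and the geometric sum over slices stays $\tOrd{1/\epsilon}$.
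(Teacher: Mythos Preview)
Your proposal is correct and follows essentially the same route as the paper's sketch (which in turn follows Montanaro): center by a crude $\Ord{1}$-sample classical mean, split into positive and negative parts, dyadically slice the nonnegative piece into $[0,1]$-valued random variables $f_{0,1}(B),f_{1,2}(B),f_{2,4}(B),\dots$, drop the tail beyond $\lceil\log(\lambda/\epsilon)\rceil$ slices using the second-moment bound, and estimate each slice with Theorem~\ref{theoremMean01}, boosting by medians.

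The one substantive difference is the error budgeting across slices. You allocate a uniform per-slice accuracy $\delta_k=\Theta(\epsilon/K)$, which yields $t_k=\Ord{K/\epsilon}$ per slice and hence $\Ord{K^2/\epsilon}=\Ord{(\lambda/\epsilon)\log^{2}(\lambda/\epsilon)}$ total uses of $\mathcal U$ before boosting. The paper (and Montanaro) instead takes the \emph{same} $t=\Ord{(\lambda/\epsilon)\sqrt{\log(\lambda/\epsilon)}}$ for every slice and then bounds the aggregate error $\sum_k 2^{k+1}\big(\sqrt{q_k}/t+1/t^2\big)$ directly: the geometric sum $\sum_k 2^{k+1}/t^2\le 2^{K+2}/t^2=\Ord{\epsilon}$ handles the second term, while Cauchy--Schwarz gives $\sum_k 2^{k+1}\sqrt{q_k}\le\sqrt{K}\,\sqrt{\sum_k 4^{k+1}q_k}=\Ord{\sqrt K}$ since $\sum_k 4^{k+1}q_k\le 4\,\mathbbm E[u^2]\le 4$, yielding the first term $\Ord{\sqrt K/t}=\Ord{\epsilon}$. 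This is precisely the ``slightly sharper, non-uniform choice'' you allude to, and it is what produces the $\log^{3/2}$ rather than $\log^{2}$ in the stated bound.
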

Theorem \ref{theoremMeanL2} can be proved by employing Theorem \ref{theoremMean01}. We proceed by sketching the proof, and refer the interested reader to the detailed treatment in \cite{Montanaro2015}.
To show Theorem \ref{theoremMeanL2}, 
the bounded-variance random variable $v(\mathcal A)$ is related to a set of 
random variables with outputs between $[0,1]$ and then the estimates of the mean of each of these random variables are combined to give the final estimate. This can be done in three steps. First, the random variable $v(\mathcal A)$ can be approximately standardized by subtracting an approximation to the mean and dividing by the known variance bound $\lambda^2$, to obtain a random variable $v'(\mathcal A)$. Second,  $v'(\mathcal A)$ can be split into positive and negative parts by using the functions $f_{\min}(x) = \min \{x,0\}$ and $f_{\max}(x) = \max \{x,0\}$. (Coincidentally, these function are similar to the call and put option payoff functions.) This defines new random variables 
$B_{<}=-f_{\min}(v'(\mathcal A))$ and  $B_{>}=f_{\max}(v'(\mathcal A))$, both taking on only values $\geq 0$. These random variables can be rescaled and combined to give the desired random variable. 

As the third step, both positive random variables $B_{<,>}=:B$ can be split into multiple auxiliary random variables with outputs between $[0,1]$ and each of these random variables is estimated by Theorem \ref{theoremMean01}. This is done by defining the functions for $0\leq a < b$
\begin{equation} 
f_{a,b}(x) = \frac{1}{b}
\begin{cases}
x & \text{if } a\leq x < b, \\
0              & \text{otherwise.}
\end{cases}
\end{equation} 
which take on values in $[0,1]$. Now the auxiliary random variables $f_{0,1}(B)$, $f_{1,2}(B)$, $f_{2,4}(B)$, $f_{4,8}(B)$ and so forth can be defined which are all taking values in $[0,1]$.  Theorem \ref{theoremMean01} can be used to estimate the mean of each of these random variables. It can be shown that only a small number $\lceil \log(\lambda/\epsilon) \rceil$ of these auxiliary random variables are needed 
to estimate the mean of random variable $B$ with final error $\epsilon$. This estimation makes use of the bounded-variance property which leads to the distribution tails contributing only a small error. 

The resource count of Theorem \ref{theoremMeanL2} is justified as follows. 
For an estimation with accuracy $\epsilon$ it can be shown that 
the number of random variables $f_{a,b}(B)$ and therefore the number of applications of Theorem \ref{theoremMean01} needed is $\lceil \log(\lambda/\epsilon) \rceil$. In addition, the number of steps $t$ in Theorem \ref{theoremMean01} can be taken to be $t =\Ord{(\lambda/\epsilon) \sqrt{\log \lambda/\epsilon}}$ and also $\delta = \Ord{1/\log (\lambda/\epsilon)}$ to achieve the final accuracy. Thus, from each application of the Theorem \ref{theoremMean01}, we need $\Ord{\log \log (\lambda/\epsilon)}$ copies of $\ket \chi$ and $\Ord{\frac{\lambda}{\epsilon} \sqrt{\log (\lambda/\epsilon)} \log \log (\lambda/\epsilon)}$ applications of $\mathcal U$. As we apply Theorem \ref{theoremMean01} for $\lceil \log(\lambda/\epsilon) \rceil$ times we require $\Ord{\log(\lambda/\epsilon) \log \log (\lambda/\epsilon)}$ copies of $\ket \chi$ and $\Ord{\frac{\lambda}{\epsilon} (\log (\lambda/\epsilon))^{3/2} \log \log (\lambda/\epsilon)}$ applications of $\mathcal U
$. 
This is an almost quadratic speedup in the number of applications of $\mathcal{U}$ when compared to the direct approach outlined in Eq.~\eqref{eqStepsQuantumNaive}.

\section{Quantum algorithm for European option pricing}
\label{sectionQuantumCall}

We specialize the above discussion to the Monte Carlo pricing of a European call option. We show how to prepare the Brownian motion distribution and the quantum circuit for the option payoff.
For any European option, i.e.~an option that depends on the asset prices only at a single maturity date $T$, we can write the price as an expectation value of a function of the underlying stochastic processes evaluated at the maturity date. For the BSM model with a single Brownian motion we have 
\begin{equation}\label{eqBSPriceEuro}
\Pi = e^{-rT} \mathbbm E_{\mathbbm Q}[v(W_T)],
\end{equation}
where $W_T$ is the Brownian motion at time $T$ and $v(x)$ is, for example, defined from 
the function $f(x)$ in Eq.~(\ref{Eq:CallPay}). 
From Definition \ref{defBM}, $W_T$ is a Gaussian random variable $\sim \mathcal N(0,T)$. The probability density for this random variable
is given by
\begin{equation}
p_T(x) = \frac{1}{\sqrt{2\pi T}} e^{-\frac{x^2}{2T} }.
\end{equation}
To prepare an approximate superposition of these probabilities, we take the support of this density from $[-\infty,\infty] \to [-x_{\max}, x_{\max}]$ and discretize this interval with $2^n$ points, where $n$ is an integer. 
Here, $x_{\max}=\Ord{\sqrt{T}}$, as a few standard deviations are usually enough to capture the normal distribution and reliably estimate the options price.
The discretization points may be defined as $x_j := -x_{\max} + j \Delta x$, with $\Delta x = 2x_{\max}/(2^n-1)$ and $j = 0,\dots,2^n-1$. Define the probabilities $p_j =  p_T(x_j)/C$, with the normalization
$C=  \sum_{j=0}^{2^n-1} p_T(x_j )$. This process is illustrated in Fig.~\ref{figure1}~(d). According to Ref.~\cite{Grover2002}, there exists a quantum algorithm $\mathcal G$ (which takes on the role of $\mathcal A$ in the previous section) such that we can prepare
\begin{equation} \label{eqGroverState}
\mathcal G \ket{0^n} =\sum_{j=0}^{2^n-1} \sqrt{p_j} \ket j.
\end{equation}
This algorithm runs in $\Ord{n}$ steps, provided that there is a method to efficiently sample the integrals
$\int_a^b p_T(x) dx$ for any $a,b$. These integrals can be efficiently sampled for any log-concave distribution, such as in the present case of the Gaussian distribution associated with $W_T$. 
We show the steps in the Appendix \ref{appendixGrover}.

Now consider the function $v(x):  \mathbbm R \to \mathbbm R$ which relates the Brownian motion to the option payoff. For the example of the European call option, the function is
\begin{equation}\label{eqBMtoPayoff}
v_{\rm euro}(x) = \max \{ 0, S_{0}e^{\sigma x + (r - \frac{1}{2} \sigma^2 )T} - K\}.
\end{equation}
At the discretization points of the Brownian motion we define
\begin{equation}
v(j) := v(x_j).
\end{equation}
One can find a binary approximation to this function over $n$ bits, i.e.~$\tilde v(j):  \{0,1\}^n \to \{0,1\}^n$, where we take 
the number of input bits to be the same as the number of output bits. The $n$ bits allow one to represent $2^n$ different floating points numbers, and with $n=n_1+n_2$ one can trade off the largest represented number $2^{n_1}$ with the accuracy of the representation, $2^{-n_2}$ \cite{IEEEFloat2008}. We can take $n=n_2$, by keeping track of the exponent of the floating point numbers offline. The accuracy for the function approximation is given by $\vert v(j) - \tilde v(j) \vert = \Ord{1/2^{n}}$, if $v(j)$ is sufficiently well behaved (e.g.~Lipschitz continuous.) The classical circuit depth of most such functions $\tilde v(j)$ discretizing real-world options payoffs is $\Ord{n}$. Via the reversible computing paradigm, this classical circuit can be turned into a reversible classical circuit using $\Ord{n}$ operations \cite{nielsen2002quantum}. Given the reversible classical circuit, the quantum circuit can be determined involving $\Ord{n}$ quantum gates. In other words we can implement the operation
\begin{equation}
\ket{j} \ket{0^n} \to \ket j \ket{\tilde v(j)}.
\end{equation}
See Appendix \ref{appendixArith} for more details on basic arithmetic operations and quantum circuits for options payoffs. 

We can now go through the steps of the quantum Monte Carlo algorithm. 
Sec.~\ref{sectionQuantumMC}  assumes availability of $\mathcal R$ with a real-valued function $v(x)$. Such a rotation can be directly implemented in some cases \cite{Low2016}. 
Here, we invoke the controlled rotation 
$\mathcal R \ket j \ket 0 = \ket j \left(\sqrt{1-\tilde v(x_j)} \ket 0 +\sqrt{\tilde v(x_j)} \ket 1 \right)$, with the discretized options payoff function $\tilde v(x)$.  See Appendix~\ref{App:Rot} for an implementation of this rotation by using an auxiliary register of qubits and the circuit for the options payoff. The steps are similar to before,
\begin{eqnarray}
\mathcal G \ket{0^n} &=& \sum_{j=0}^{2^n-1} \sqrt{p(x_j )} \ket j  \\
&\to&   \sum_{j=0}^{2^n-1} \sqrt{p(x_j )} \ket j \\&& \quad \left (\sqrt{1-\tilde v(x_j)} \ket 0 +\sqrt{\tilde v(x_j)} \ket 1 \right) =: \ket \chi. \nonumber
\end{eqnarray}
Measuring the ancilla in the state $\ket 1$, we obtain the expectation value
\begin{equation}
\mu = \bra \chi (\mathcal I_{2^{n}} \otimes \ket{1}\bra{1}) \ket \chi = \sum_{j=0}^{2^n-1} 
p_T(x_j ) \tilde v(x_j). 
\end{equation}
This expectation value $\mu$, assuming it can be measured exactly, determines the option price $\mathbbm E_{\mathbbm Q}[v(W_T)]$ to accuracy 
\begin{equation}
\vert \mu - \mathbbm E_{\mathbbm Q}[v(W_T)] \vert =: \nu ~.
\end{equation}
The error arises from the discretization of the probability density and the accuracy of the function approximation $\tilde v$. Using $n$ qubits the accuracy is given by $\nu = \Ord{ 2^{-n}}$.

We can employ phase estimation and Theorems \ref{theoremMean01}  and \ref{theoremMeanL2} to evaluate $\mu$ to a given accuracy and get a bound on the number of computational steps needed. For the European call option, 
we can show that the variance is bounded by $\mathbbm V_{\mathbbm Q}[f(S_T)] \leq \lambda^2$ where $\lambda^2 := \Ord{{\rm poly}(S_0, e^{rT},e^{\sigma^2 T},K)}$, see Appendix \ref{appendixBS}. Thus from Theorem \ref{theoremMeanL2} 
we know that we can use $\Ord{ \log(\lambda/\epsilon) \log \log (\lambda/\epsilon) } $ copies
of $\ket \chi$ and $\Ord{ (\lambda/\epsilon) \log^{3/2}(\lambda/\epsilon) \log \log (\lambda/\epsilon)}$ applications of $\mathcal U$ to provide an estimate $\hat \mu$ for $\mu$ up to additive error $\epsilon$ with success probability at least $2/3$.
The accuracy is $\epsilon < 4 \lambda$.
The total error  is
\begin{equation}
\vert \hat \mu - \mathbbm E_{\mathbbm Q}[f(S_T)] \vert \leq \epsilon + \nu,
\end{equation}
compounding the two sources from amplitude estimation and the discretization error. Discounting $\hat \mu$, see Eq.~(\ref{eqBSPriceEuro}), then retrieves an estimation of the option price $\hat \Pi$.
The total number of applications of $\mathcal U$ is 
\begin{equation}
\tOrd{\frac{\lambda}{\epsilon} },
\end{equation}
where $\tOrd{\cdot}$ suppresses polylogarithmic factors.
This quantity can be considered the analogue of the classical number of Monte Carlo runs
\cite{Montanaro2015,Xu2018}.
The required number of quantum steps is quadratically better than the classical number of steps in Eq.~(\ref{eqStepsClassical}), or the naive quantum case in Eq.~(\ref{eqStepsQuantumNaive}).

\section{Asian option pricing}
\label{sectionAsian}

Up to this point, we have discussed the quantum Monte Carlo pricing of derivatives via the illustrative example of the European call option. This call option can be priced analytically in the Black-Scholes-Merton framework, thus MC methods are in principle not required. 
Another family of options is the so-called Asian options, 
which depend on the average asset price before the maturity date \cite{Kemna1990}. 

\begin{defi} [Asian options]
The Asian call option payoff is defined as
\begin{equation}
f(A_T) = \max \{ 0, A_T - K\},
\end{equation}
where $K$ is the strike price and $T$ the maturity date. The arithmetic mean option value is defined via
\begin{equation}
A^{\rm arith}_T = \frac{1}{L} \sum_{l=1}^L S_{t_l},
\end{equation}
and the geometric mean option is defined via
\begin{equation}
A^{\rm geo}_T = \exp \frac{1}{L} \sum_{l=1}^L  \log S_{t_l},
\end{equation}
for pre-defined time points $0<t_1< \dots < t_L \leq T$, with $L\geq 1$.
\end{defi}
The following discussion assumes the BSM framework as before.
In this framework, the geometric mean Asian option can be priced analytically, while such a solution is not known for the arithmetic Asian option.  
Assume for this discussion 
that all adjacent time points are separated by the interval $\Delta t$, i.e.~$t_{l+1} - t_l = \Delta t = T/L$ for all $l=1,\dots,L-1$.
 Analogously to before, we can efficiently prepare via the Grover-Rudolph
algorithm \cite{Grover2002} a state that corresponds to the Gaussian normal distribution with
variance $\Delta t$
\begin{equation}
\ket{p_{\Delta t}} := \mathcal G \ket{0^m} = \sum_{j=0}^{2^m-1} \sqrt{p_{\Delta t}(x_j )} \ket j  
\end{equation}
This state uses $m$ qubits and takes $\Ord{m}$ steps to prepare. 
Then we prepare the product state with $L$ such states, i.e.,
\begin{equation}\label{eqAsianProductState}
\ket p := \ket{p_{\Delta t}} \dots  \ket{p_{\Delta t}}.
\end{equation}
This state uses $L m$ qubits and takes $\Ord{L m}$ steps to prepare. 

In addition, we require the operation
\begin{equation} \label{eqOpAsian}
\ket{j_1,\dots,j_L} \ket{0}=\ket{j_1,\dots,j_L} \ket{A(S_{t_1}(x_{j_1}),\dots,S_{t_L}(x_{j_L})}.
\end{equation}
Here, $A(S_{t_1}(x_{j_1}),\dots,S_{t_L}(x_{j_L}))$ is the average stock price corresponding to the Brownian path 
$x_{j_1},\dots,x_{j_L}$. This operation is easily computable. Each index $j$ is mapped to its corresponding point $x_j$
via $x_j = -x_{\max} + j \Delta x$ as before. Then start at the known $S_0$ and use
\begin{equation}
S_{t_{l+1}}(x) = S_{t_{l}} e^{\sigma x +(r-\sigma^2/2)\Delta t}
\end{equation}
to obtain the stock price at the next time point, where $x$ is a sample of the Brownian motion. This step can also be performed in the log domain \cite{Kemna1990}
\begin{equation}
\log S_{t_{l+1}}(x) = \log S_{t_{l}} + \sigma x +(r-\sigma^2/2)\Delta t.
\end{equation}
In this way, one obtains a state where the label $\ket{j_1,\dots,j_L}$ associated with the corresponding stock price path, 
\begin{equation}
\ket{j_1,\dots,j_L} \ket{S_{t_1}(x_{j_1})} \dots \ket{S_{t_L}(x_{j_L})}.
\end{equation}
Moreover, the average (both arithmetic and geometric) can be computed in a sequential manner, since we can
implement the step
\begin{eqnarray}
&\ket{j_1,\dots,j_L}  \ket{S_{t_l}(x_{j_l})} \ket{A(S_{t_1}(x_{j_1}),\dots,S_{t_l}(x_{j_l}))} \to \\
& 
\ket{j_1,\dots,j_L}  \ket{S_{t_{l+1}}(x_{j_{l+1}})} \ket{A(S_{t_1}(x_{j_1}),\dots,S_{t_{l+1}}(x_{j_{l+1}}))}. \nonumber
\end{eqnarray}
The steps are performed until the final time $t_L$ is reached and $A(S_{t_1}(x_{j_1}),\dots,S_{t_L}(x_{j_L}))$ is stored in a register of qubits.
Reversibility of the quantum arithmetic operations guarantees that registers storing the intermediate steps can be uncomputed. 
Applying operation Eq.~(\ref{eqOpAsian}) to the product state Eq.~(\ref{eqAsianProductState}) obtains
\begin{equation}
\sum_{j_1\dots j_L=0}^{2^m-1}\sqrt{ p_{j_1,\dots,j_L} } \ket {j_1, \dots, j_L}  \ket{A(S_{t_1}(x_{j_1}),\dots,S_{t_L}(x_{j_L}))} .
\end{equation}
with $\sqrt {p_{j_1,\dots,j_L}}:=\sqrt{p_{\Delta t}(x_{j_1} )} \dots \sqrt{p_{\Delta t}(x_{j_L} )} $.
Analogously to before, a conditional rotation of an ancilla qubit can be performed such that measuring the ancilla in the $\ket 1$ state obtains 
\begin{equation}
\sum_{j_1\dots j_L=0}^{2^m-1} p_{j_1,\dots,j_L} f(A(S_{t_1}(x_{j_1}),\dots,S_{t_L}(x_{j_L})) )\approx
\mathbbm E_{\mathbbm Q}[f(A)].
\end{equation}
The result is an approximation to the Black-Scholes price of the Asian option. The variance of the option can be bounded 
from the fact that the arithmetic mean upper bounds the geometric mean and the arithmetic mean itself is upper bounded by the expected maximum of the stock price $\max \{S_{t_1},\dots, S_{t_L} \}$. The variance of options such as calls or puts on the maximum of a stock in a time period can be bounded \cite{Shreve2004} similar to the variance bound of the European call option, which is presented in Appendix \ref{appendixBS}.
We thus obtain a similar speedup for the Asian options via Theorem \ref{theoremMeanL2}.

\section{Numerical Simulations}
While a practical quantum computer has yet to become a reality, we can exhibit the speedup of our quantum algorithm for options pricing numerically, and compare its performance with the
classical Monte Carlo method. Note that our quantum algorithm consists of two
main parts, see also Fig.~\ref{figure1}~(c). First, prepare the Brownian motion superposition (through $\mathcal{A}$) and encode the option payoff onto an ancilla qubit (through $\mathcal{R}$); and second, use amplitude amplification and phase estimation with repeated applications of $\mathcal{Q}$ to estimate the expectation value encoded in the ancilla qubit. 
The phase estimation subroutine can in principle be simulated using publicly available
quantum software packages such as Strawberry Fields \cite{Killoran2018} and ProjectQ
\cite{Steiger2018}.
However, to showcase the quadratic speed
up, we here perform phase estimation by using a single qubit rotated according to $e^{i\theta\sigma_z/2}$, where $\theta$ is the predetermined phase.

We perform numerics for a phase $\theta$ given by the European
call option, see Sec.~\ref{sectionBS}, which can be priced analytically. 
The analytical price $\Pi$ is  computed directly from the Black-Scholes-Merton formula, see Result \ref{resultBS}.
We provide an estimate $\hat \theta$ using both
quantum phase estimation and the standard classical Monte Carlo method.
Here, the analytical price $\Pi$ is used both as an input to the single-qubit phase estimation 
via $\theta$ from Eq.~(\ref{eqMeanToAngle}), as well as a benchmark for the resultant simulations.
The single-qubit phase estimation is described in Appendix \ref{appendixPhaseEstimation}.
We define the
corresponding estimation error as the difference between the estimated price
$\hat \Pi$ and analytical price $\Pi$, i.e.
\begin{equation}
 \rm Error := |\hat \Pi- \Pi|.
\label{error}
\end{equation}
In the figures and following discussion, we will use subscripts $Q$ and $C$ to
denote the quantum and classical estimations respectively. The estimation error
follows a power-law behavior with the number of MC steps $k$ undertaken
\begin{equation}
{\rm Error} = a~k^\zeta,
\label{powerlaw}
\end{equation}
where $\zeta$ is the scaling exponent and $a$ is a constant. As discussed in Sec.~\ref{sectionMC}, as well as
obtained in our simulations, the scaling exponent for classical MC estimation
is $\zeta_{\rm C}=-1/2$. 
For the quantum case, the $k_Q$ is the total number of applications of the single-qubit unitary.
The simulations are performed such that quantum and classical estimates have a similar confidence ($>99.5\%$), which implies a number of independent single-qubit phase estimation runs of $D\approx 24$ \cite{Xu2018}, see also Appendix \ref{appendixPhaseEstimation}.

\begin{figure}[t!]
\includegraphics[width=0.5\textwidth]{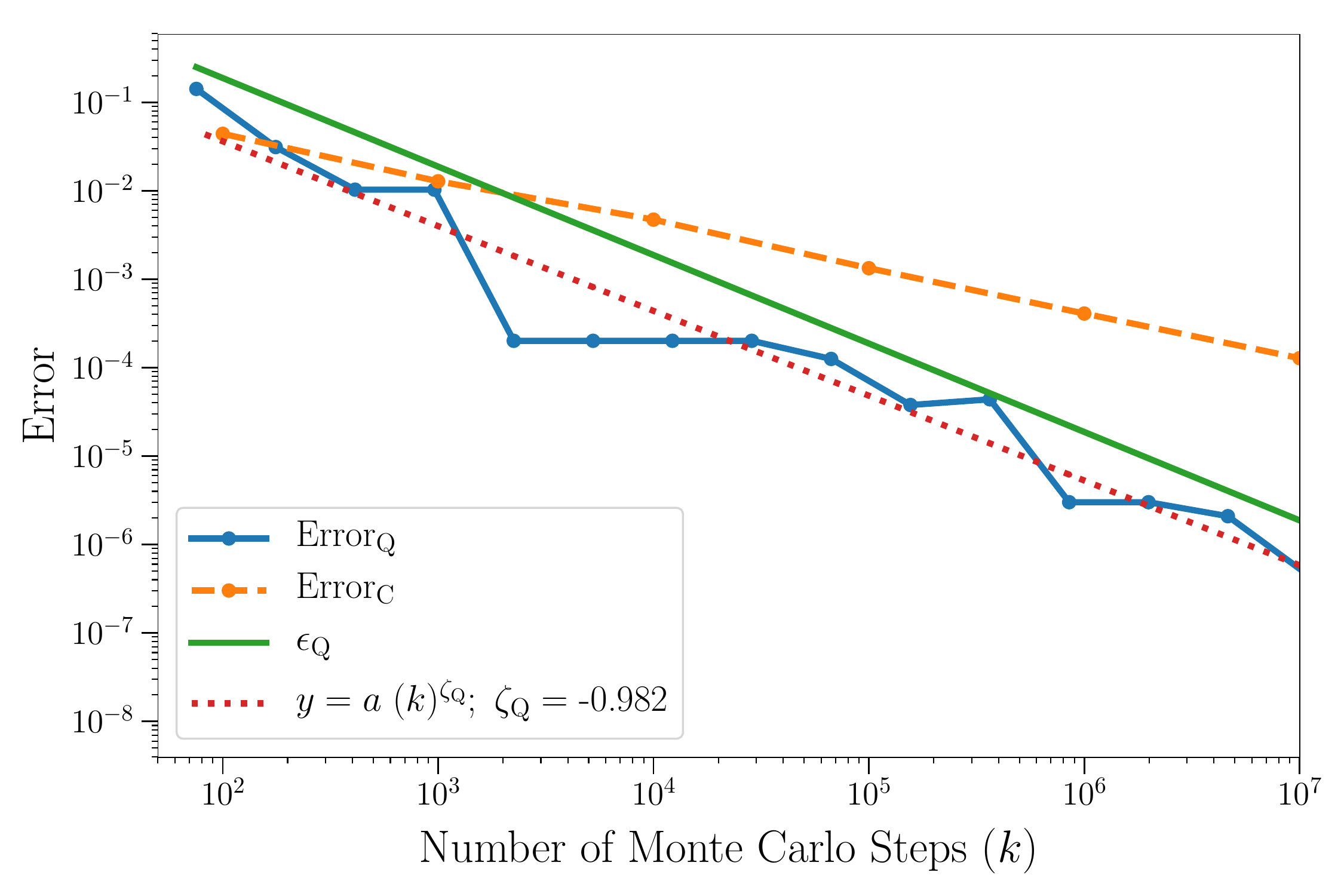}
\vspace{-0.6cm}\caption{Scaling of the error in classical and quantum MC methods (defined in Eq.\ (\ref{error})) plotted
against number of MC steps for a European call option in log-log scale with
$S_0=\$100, ~K=\$50, ~r=0.05, ~\sigma=0.2$, $T=1$, and $D=24$.
Subscripts $C$ and $Q$ denote the errors from classical MC and
quantum phase estimation, respectively. Evidently, the error for the quantum algorithm (with a fitted slope of
$\zeta_{\rm Q}=-0.982$) scales almost quadratically faster than the classical
MC method (which has $\zeta_{\rm C}=-0.5$). The theoretical upper bound on the
error in quantum algorithm is shown by the solid green curve, which corresponds to
$\zeta_{\rm Q}=-1$.}
	\label{fig2}
\end{figure}

Fig.~\ref{fig2} shows a comparison between the error scalings for our quantum
algorithm (blue solid curve with markers) and classical MC (orange dashed
curve). The parameters for this figure are: $S_0=\$100, ~K=\$50, ~r=0.05, ~\sigma=0.2$, $T=1$, and $D=24$. The analytical price was determined to be $\Pi=\$10.5$ and rescaled to the corresponding $\theta = 2\arccos(1-2 \Pi / S_{0})$ via Eq.~\eqref{eqMeanToAngle} for use in the phase estimation. In addition, we have also plotted a fit of the quantum error to the 
power law given in Eq.~(\ref{powerlaw}) resulting in the scaling exponent to be
$\zeta_{\rm Q}=-0.982$. It is evident that the scaling exponent of quantum phase estimation is almost twice the classical one. Indeed, the green solid curve shows the upper bound on the errors in quantum estimation
defined as \cite{Xu2018}, see also Appendix~\ref{appendixPhaseEstimation}:
\begin{equation}\label{UpperBound}
\epsilon_{\rm Q} := \left|\cos\left(\frac{\hat \theta}{2} 
                             +\frac{\pi}{k_Q }\right)
                             -\cos\left(\frac{\hat\theta}{2}\right)\right|~,
\end{equation}
where $\hat\theta$ is the phase estimated via the quantum algorithm. This upper bound has a scaling exponent of $-1$ and hence straightforwardly demonstrates the quadratic speedup in the number of steps for phase estimation to a given error.

To test the robustness of the quadratic speedup in scaling, we vary the strike
price and plot the ratio of the quantum to the classical scaling exponents
$\zeta_{\rm Q}/\zeta_{\rm C}$ in Fig.~\ref{fig3}. We obtain an almost quadratic
advantage in estimation overhead for all strike prices. Similar tests by varying other parameters also show a robust quadratic quantum speedup of the Monte Carlo estimation. 

\section{Discussion and conclusion}

In this work, we have presented a quantum algorithm for the pricing of financial derivatives. We have assumed that the distribution of the underlying random variables, i.e.~the martingale measure, is known and the corresponding quantum states can be prepared efficiently. In addition, we assume efficient computability of the derivative payoff function. 
Under these assumptions, we exhibit a quadratic speedup in the number of samples required to estimate the price of the derivative up to a given error: if the desired accuracy is $\epsilon$, then classical methods show a $1/\epsilon^2$ dependency in the number of samples, while the quantum algorithm shows a
$1/\epsilon$ dependency.

\begin{figure}[t!]
\vspace{-0.5cm}
\includegraphics[width=0.525\textwidth]{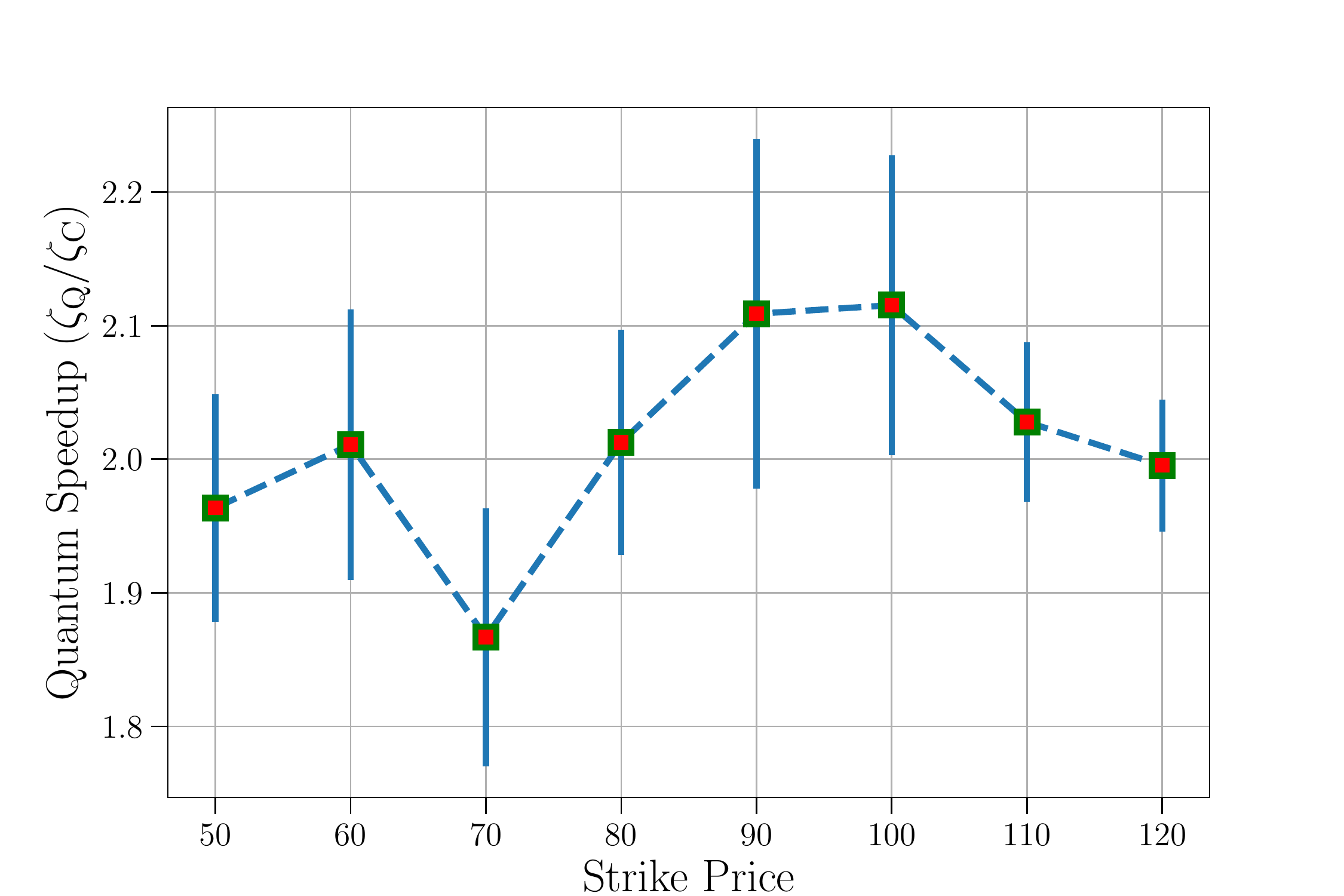}
	\caption{Ratio of the quantum to classical scaling exponents. The quantum scaling is obtained by
fitting the simulations results to the power law in Eq.~(\ref{powerlaw}), while the classical
scaling exponent is taken to be $-0.5$. Results are plotted with varying strike price $K$ (in dollars) and fixing
other parameters to be the same as in Fig.~\ref{fig2}. An almost quadratic speed up is obtained for all chosen values of $K$.  }
	\label{fig3}
\end{figure}

As an exemplary case, we have discussed  European call options for which analytical solutions are known. In addition, we have discussed Asian options, which in the arithmetic averaging case require Monte Carlo methods to be priced. Our approach can in principle be applied to any derivative which has a payoff that is a function that can be efficiently broken down into elementary arithmetic operations within a quantum circuit, and which can also depend on an average over multiple time windows. Future work will extend these discussions to the complex payoff functions often used at leading financial institutions, as well as addressing more complicated stochastic models. 

Monte Carlo simulations play a major role in managing the risk that a financial institution is exposed to \cite{Follmer2004}. Especially after the financial crisis of 2008-9, sophisticated risk management is increasingly important to banks internally and also required by government regulators \cite{Green2015,Zeitsch2017}. Such risk analysis falls under the umbrella of so-called valuation adjustments (VA), or XVA \cite{Green2015,Zeitsch2017}, where X stands for the type of risk under consideration. An example is CVA, where the counterparty credit risk is modeled. Such a valuation adjusts the price of the derivative based on the risk that the counterparty in that financial contract runs out of money.

XVA calculations are a major computational effort for groups (`desks') at financial institutions that handle complex derivatives such as those based on interest rates. 
For complex financial derivatives, such risk management involves a large amount of Monte Carlo simulations. Different Monte Carlo runs assess the price of a derivative under various scenarios. 
Determining the risk of the complete portfolio of a desk often requires overnight calculations of the prices according to various risk scenarios. Quantum computers promise a significant speedup for such computations. In principle, overnight calculations could be reduced to much shorter time scales (such as minutes), which would allow a more real time analysis of  risk. Such close-to real time analysis would allow the institution to react faster to changing market conditions and to profit from trading opportunities. 

The qubit model for universal quantum computing was employed here to provide a succinct discussion. However, one is not restricted to such a setting but can use other universal quantum computational models such adiabatic quantum computation or continuous variable (CV) quantum computation. In the continuous variable setting, instead of qubits one has oscillators which in principle have infinite dimensional Hilbert spaces. In this setting, preparing Gaussian states for the probability distributions can be done in a straightforward manner, instead of employing the relatively complicated preparation routine according to Grover-Rudolph. Investigating the promising advantages of the CV setting in a financial context in more detail will be left for future work. 

\acknowledgements
We acknowledge Juan Miguel Arrazola, John C. Hull, Ali Najmaie and Vikash Ranjan for insightful discussions. 


\appendix 

\section{Black-Scholes calculations}
\label{appendixBS}

Here, we show several calculations related to the Black-Scholes-Merton model. We show the solution to the stochastic differential equation, the martingale property of the stock price, the BSM price for the European call option, and an upper bound for the variance of the BSM price for the call option.
First note the following result.
\begin{result} (informal) \label{resultQuadraticVariation}
For the infinitesimal Brownian increment it holds that 
\begin{equation}
dW_t^2 = dt
\end{equation}
\end{result}
This result can be reasoned by the variance of the Brownian increment being proportional to the 
time interval of the increment. 
Next, we show the solution to the following stochastic differential equation.
\begin{lemma}
The stochastic differential equation 
\begin{equation}
dS_t = S_t \alpha dt +  S_t \sigma dW_t,
\end{equation}
is solved by 
\begin{equation}
S_t = S_0 e^{\sigma W_t + (\alpha - \frac{\sigma^2}{2}) t}.
\end{equation}
\end{lemma}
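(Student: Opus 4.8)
The plan is to verify the claimed solution directly by applying Itô's lemma, using the quadratic-variation rule $dW_t^2 = dt$ from Result~\ref{resultQuadraticVariation}. First I would introduce the shorthand $X_t := \sigma W_t + (\alpha - \sigma^2/2)\, t$, so that the candidate solution reads $S_t = S_0\, e^{X_t} = g(X_t)$ with $g(x) := S_0 e^x$. Since $W_t$ has zero drift and unit diffusion coefficient, $X_t$ satisfies $dX_t = \sigma\, dW_t + (\alpha - \sigma^2/2)\, dt$, and from $dW_t^2 = dt$ (together with $dt\,dW_t = 0$ and $dt^2 = 0$) one gets $(dX_t)^2 = \sigma^2\, dt$.

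Next I would expand $dS_t = d\,g(X_t)$ to second order. Because $g' = g'' = g$, Itô's lemma gives
\begin{equation}
dS_t = g'(X_t)\, dX_t + \tfrac{1}{2} g''(X_t)\, (dX_t)^2 = S_t\, dX_t + \tfrac{1}{2} S_t\, \sigma^2\, dt.
\end{equation}
Substituting the expression for $dX_t$ and collecting terms,
\begin{equation}
dS_t = S_t\!\left(\sigma\, dW_t + \Big(\alpha - \tfrac{\sigma^2}{2}\Big) dt\right) + \tfrac{1}{2} S_t\, \sigma^2\, dt = S_t\, \alpha\, dt + S_t\, \sigma\, dW_t,
\end{equation}
which is exactly the stated SDE; the $-\sigma^2/2$ drift correction in the exponent is precisely what cancels the Itô second-order term. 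Finally I would check the initial condition: at $t=0$, $W_0 = 0$ by Definition~\ref{defBM}, so $X_0 = 0$ and $S_0 = S_0 e^{0}$, as required.

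The only real subtlety — the "hard part" — is that this argument establishes that the given process solves the SDE, but not uniqueness; a fully rigorous treatment would invoke a standard existence–uniqueness theorem for SDEs with Lipschitz (here linear) coefficients, or run the same Itô computation on $\log S_t$ for any solution $S_t$ to show $d(\log S_t) = (\alpha - \sigma^2/2)\,dt + \sigma\, dW_t$ and integrate. For the purposes of this paper it suffices to present the verification above, noting that uniqueness follows from the linearity of the coefficients, and to treat the manipulation $dW_t^2 = dt$ as justified by Result~\ref{resultQuadraticVariation} rather than re-deriving the Itô calculus from scratch.
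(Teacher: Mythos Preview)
Your proposal is correct and follows essentially the same verification-by-It\^o approach as the paper: both compute the differential of the candidate $S_t$ via a second-order expansion and invoke $dW_t^2 = dt$ (Result~\ref{resultQuadraticVariation}) to recover the SDE. The only cosmetic difference is that the paper writes $S_t = s(t,W_t)$ and expands in $(dt,dW_t)$ directly, whereas you package the exponent into $X_t$ and apply the one-variable It\^o formula; your added remarks on the initial condition and uniqueness are extra care not present in the paper.
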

\begin{proof}
We show this result by finding the differential $dS_t$ given $S_t$. The quantity $S_t$ can be seen as a function 
$s(t,x)$.
To obtain the differential, expand this function to second order in $dx$ and use the solution $S_t$ to obtain
\begin{eqnarray}
d s(t,x) &=& \frac{\partial s}{\partial t} dt+  \frac{\partial s}{\partial x} dx +\frac{1}{2} \frac{\partial^2 s}{\partial x^2} dx^2 +\Ord{dt^2,dx^3}\\ \nonumber
&=& (\alpha - \frac{\sigma^2}{2}) s(t,x) dt + \sigma s(t,x) dx + \frac{\sigma^2}{2} s(t,x) dx^2.
\end{eqnarray}
Using $dx = dW_t$ and $dx^2 = dW_t^2 = dt$ from Result \ref{resultQuadraticVariation} leads to
the differential
\begin{equation}
dS_t = S_t \alpha dt +  S_t \sigma dW_t.
\end{equation} 
\end{proof}
Next, we show the martingale property of the discounted stock price.
\begin{lemma}\label{lemmaMartingale}
Under the $\mathbbm Q$ measure, the stock price is a martingale, i.e. 
\begin{equation}
S_0=e^{-r T} \mathbbm E_{\mathbbm Q} [S_T].
\end{equation}
\end{lemma}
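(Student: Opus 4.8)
The plan is to use the closed-form solution for the stock price under the martingale measure $\mathbbm Q$, namely $S_T = S_0 e^{\sigma \tilde W_T + (r - \sigma^2/2)T}$, and reduce the claim to a Gaussian moment computation. Since $\tilde W_T \sim \mathcal N(0,T)$ under $\mathbbm Q$ by Definition~\ref{defBM}, the deterministic factor $S_0 e^{(r-\sigma^2/2)T}$ pulls out of the expectation, leaving $\mathbbm E_{\mathbbm Q}[S_T] = S_0 e^{(r-\sigma^2/2)T}\, \mathbbm E_{\mathbbm Q}[e^{\sigma \tilde W_T}]$.

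First I would evaluate $\mathbbm E_{\mathbbm Q}[e^{\sigma \tilde W_T}]$ by writing it as the Gaussian integral $\frac{1}{\sqrt{2\pi T}}\int_{-\infty}^{\infty} e^{\sigma x} e^{-x^2/(2T)}\,dx$ and completing the square in the exponent: $\sigma x - \frac{x^2}{2T} = -\frac{1}{2T}(x - \sigma T)^2 + \frac{\sigma^2 T}{2}$. The shifted Gaussian integrates to $\sqrt{2\pi T}$, so the moment generating function equals $e^{\sigma^2 T/2}$. Substituting back gives $\mathbbm E_{\mathbbm Q}[S_T] = S_0 e^{(r-\sigma^2/2)T} e^{\sigma^2 T/2} = S_0 e^{rT}$, and multiplying by the discount factor $e^{-rT}$ yields $e^{-rT}\mathbbm E_{\mathbbm Q}[S_T] = S_0$, as claimed.

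As an alternative route I would note that the $\mathbbm Q$-dynamics $dS_t = S_t r\,dt + S_t \sigma\, d\tilde W_t$ integrate to $S_t = S_0 + r\int_0^t S_u\,du + \sigma \int_0^t S_u\, d\tilde W_u$; taking expectations and using that the It\^o integral of a square-integrable integrand is a mean-zero martingale gives $m(t) := \mathbbm E_{\mathbbm Q}[S_t]$ satisfying $m'(t) = r\, m(t)$, $m(0) = S_0$, hence $m(T) = S_0 e^{rT}$. Either argument delivers the result; I would present the first since it is self-contained given the explicit solution already derived.

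The computation is essentially routine; the only point requiring care is integrability --- one must check that $\mathbbm E_{\mathbbm Q}[|S_T|] < \infty$ (immediate from the finiteness of the Gaussian moment generating function just computed) so that the expectation is well defined, and, in the SDE argument, that $\int_0^t S_u\, d\tilde W_u$ genuinely has zero mean, which follows because $\mathbbm E_{\mathbbm Q}[\int_0^t S_u^2\,du] < \infty$ by the same log-normal moment bound. I expect this integrability bookkeeping, rather than the algebra of completing the square, to be the only mild obstacle.
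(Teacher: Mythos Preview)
Your primary argument is correct and essentially identical to the paper's: both insert the closed-form $S_T = S_0 e^{\sigma \tilde W_T + (r-\sigma^2/2)T}$, pull out the deterministic factor, and evaluate $\mathbbm E_{\mathbbm Q}[e^{\sigma \tilde W_T}]$ by completing the square in the Gaussian integral (the paper first rescales $x \to \sqrt{T}x$ before completing the square, but this is cosmetic). Your added integrability remarks and the alternative ODE-for-the-mean route are extras not present in the paper's sketch.
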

\begin{proof}
\begin{eqnarray}
e^{-r T} \mathbbm E_{\mathbbm Q} [S_T] &=&S_0 \mathbbm E_{\mathbbm Q} [ e^{\sigma \tilde W_T - \frac{\sigma^2 T}{ 2} }] \\
&=& e^{-\frac{\sigma^2T}{2} }   \frac{S_0 }{\sqrt{2\pi T}}  \int_{-\infty}^\infty dx\ e^{-\frac{x^2}{2T}} e^{\sigma x}\\
&=&S_0.
\end{eqnarray} 
In the last step we simplified by substitution and completed the square as
\begin{eqnarray}
\int_{-\infty}^\infty dx\ e^{-\frac{x^2}{2T}} e^{\sigma x} &=& \sqrt{T} \int_{-\infty}^\infty dx\ e^{-\frac{x^2}{2}} e^{\sigma \sqrt{T} x}\\
&=& \sqrt{T} e^{\frac{\sigma^2 T}{2}} \int_{-\infty}^\infty dx\ e^{-\frac{(x-\sqrt{T}\sigma)^2}{2}} \\
&=&  \sqrt{2\pi T} e^{\frac{\sigma^2 T}{2}}.
\end{eqnarray}
\end{proof}
Via a similar calculation we can obtain the price for the call option. 
\begin{lemma}
The Black-Scholes price for the European call option is given by
\begin{equation}
\Pi = S_0 \Phi (d_1) - K e^{-rT} \Phi (d_2).
\end{equation}
\end{lemma}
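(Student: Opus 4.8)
The plan is to evaluate the risk-neutral expectation of Eq.~(\ref{eqPrice}) directly as a Gaussian integral, reusing the completing-the-square trick from the proof of Lemma~\ref{lemmaMartingale}. Starting from $\Pi = e^{-rT}\,\mathbbm E_{\mathbbm Q}[\max\{0, S_T - K\}]$ with $S_T = S_0 e^{\sigma \tilde W_T + (r-\sigma^2/2)T}$ and $\tilde W_T \sim \mathcal N(0,T)$ under $\mathbbm Q$, I would first substitute $\tilde W_T = \sqrt{T}\,z$ with $z\sim\mathcal N(0,1)$, so that $\mathbbm E_{\mathbbm Q}[\max\{0,S_T-K\}] = \frac{1}{\sqrt{2\pi}}\int_{-\infty}^{\infty} dz\, e^{-z^2/2}\,\max\{0,\ S_0 e^{\sigma\sqrt{T}z+(r-\sigma^2/2)T}-K\}$. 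The payoff is nonzero exactly when $S_0 e^{\sigma\sqrt{T}z+(r-\sigma^2/2)T} > K$, i.e.\ when $z > z_0$ with $z_0 = \frac{1}{\sigma\sqrt{T}}\big[\log(K/S_0) - (r-\sigma^2/2)T\big] = -d_2$, using the definition of $d_2$ from Result~\ref{resultBS}. This splits the expectation into a ``stock'' term $\frac{S_0}{\sqrt{2\pi}}\int_{z_0}^{\infty} dz\, e^{-z^2/2}\,e^{\sigma\sqrt{T}z+(r-\sigma^2/2)T}$ and a ``strike'' term $-\frac{K}{\sqrt{2\pi}}\int_{z_0}^{\infty} dz\, e^{-z^2/2}$.

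The strike term is immediate: by symmetry of the standard Gaussian, $\frac{1}{\sqrt{2\pi}}\int_{z_0}^{\infty} e^{-z^2/2}\,dz = \Phi(-z_0) = \Phi(d_2)$, contributing $-Ke^{-rT}\Phi(d_2)$ after multiplying by the discount factor $e^{-rT}$. For the stock term I would complete the square in the exponent exactly as in Lemma~\ref{lemmaMartingale}: $-\tfrac12 z^2 + \sigma\sqrt{T}z = -\tfrac12(z-\sigma\sqrt{T})^2 + \tfrac12\sigma^2 T$. The factor $e^{\sigma^2 T/2}$ combines with the prefactor $e^{(r-\sigma^2/2)T}$ to give $e^{rT}$, and the change of variable $u = z - \sigma\sqrt{T}$ shifts the lower limit to $z_0 - \sigma\sqrt{T} = -d_2 - \sigma\sqrt{T} = -d_1$ (precisely because $d_1 = d_2 + \sigma\sqrt{T}$). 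Hence the stock term equals $S_0 e^{rT}\,\frac{1}{\sqrt{2\pi}}\int_{-d_1}^{\infty} e^{-u^2/2}\,du = S_0 e^{rT}\Phi(d_1)$, which becomes $S_0\Phi(d_1)$ after discounting. Adding the two contributions gives $\Pi = S_0\Phi(d_1) - Ke^{-rT}\Phi(d_2)$, as claimed.

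The computation is essentially routine, so the only real care needed is bookkeeping: tracking the sign conventions when converting a tail integral $\int_{z_0}^{\infty}$ into $\Phi(-z_0)$, and verifying that the variable shift in the stock term produces exactly the extra $-\sigma\sqrt{T}$ that separates $d_1$ from $d_2$. That shift is in fact the conceptual heart of the formula: it is the same mechanism by which $\mathbbm E_{\mathbbm Q}[S_T]$ grows at the risk-free rate $r$ in Lemma~\ref{lemmaMartingale}, here restricted to the in-the-money region $\{S_T > K\}$, which explains why the ``stock leg'' is weighted by $\Phi(d_1)$ rather than $\Phi(d_2)$.
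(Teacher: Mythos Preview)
Your proposal is correct and follows essentially the same route as the paper's proof: both split the discounted expectation into a ``stock'' term and a ``strike'' term via the in-the-money indicator, evaluate the strike term directly as $\Phi(d_2)$, and handle the stock term by completing the square in the Gaussian exponent to shift the integration limit from $-d_2$ to $-d_1$. The only cosmetic difference is that you standardize to $z=\tilde W_T/\sqrt{T}\sim\mathcal N(0,1)$ from the outset, whereas the paper keeps the integration variable as $\tilde W_T$ and absorbs the $\sqrt{T}$ in the limits.
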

\begin{proof}
To compute the price we start at
\begin{equation}
\Pi =  e^{-r T} \mathbbm E_{\mathbbm Q} [f(S_T)].
\end{equation}
Note that for the call option we can write 
\begin{equation}
f(S_T) = \max \{ 0, S_T - K\} = (S_T-K) \mathbbm 1_{S_T \geq K},
\end{equation}
where $ \mathbbm 1_{x}$ is the indicator function on the set $x$.
Using the solution for $S_T$ we have
\begin{equation}
f(\tilde W_T)  = (S_0 e^{\sigma \tilde W_T +(r -\sigma^2/2) T}-K) \mathbbm 1_{\tilde W_T \geq\frac{ \left(\log \frac{K}{S_0} -(r-\frac{\sigma^2}{2}) T \right)}{\sigma}}.
\end{equation}
Compute the part involving the strike price $K$
\begin{eqnarray}
\mathbbm E \left[\mathbbm 1_{\tilde W_T \geq\frac{ \left(\log \frac{K}{S_0} -(r-\frac{\sigma^2}{2}) T \right)}{\sigma}}\right] &=&  
\int_{\frac{ \left(\log \frac{K}{S_0} -(r-\frac{\sigma^2}{2}) T \right)}{\sigma \sqrt{T}}}^\infty dx\ p(x) \nonumber \\
&=&\Phi (d_2).
\end{eqnarray}
Here we use the cumulative distribution function 
\begin{equation}
\Phi(x) = \int_{-\infty}^x dy\ p(y) := \frac{1}{\sqrt {2\pi}} \int_{-\infty}^x dy\ e^{-\frac{y^2}{2}}.
\end{equation}
The part involving the stock price is
\begin{eqnarray}
&\mathbbm E \left[S_0 e^{\sigma \tilde W_T -\sigma^2 T/2} \mathbbm 1_{\tilde W_T \geq\frac{ \left(\log \frac{K}{S_0} -(r-\frac{\sigma^2}{2}) T \right)}{\sigma}}\right]  \\
&= S_0e^{-\sigma^2 T/2} \int_{\frac{ \left(\log \frac{K}{S_0} -(r-\frac{\sigma^2}{2}) T \right)}{\sigma \sqrt{T}}}^\infty dx\ p(x) e^{ \sigma \sqrt{T} x}  \nonumber\\
&= S_0 \Phi (d_1). \nonumber
\end{eqnarray}
Here, 
\begin{eqnarray}
d_1&=& \frac{1}{\sigma \sqrt{T}} \left[\log \left( \frac{S_0}{K} \right) + \left(r+\frac{\sigma^2}{2} \right) T \right], \\
d_2&=& \frac{1}{\sigma \sqrt{T}} \left[\log \left( \frac{S_0}{K} \right) + \left(r-\frac{\sigma^2}{2} \right) T \right],
\end{eqnarray}
This leads to the Black-Scholes price for the call option.
\end{proof}

\begin{lemma}
The variance of the European call option $f(S_T)$ under the risk-neutral probability measure $\mathbbm Q$ can be bounded as $\mathbbm V_{\mathbbm Q}[f(S_T)] \leq \lambda^2$ with $\lambda^2 := \Ord{{\rm poly}(S_0, e^{rT},e^{\sigma^2 T},K)}$. 
\end{lemma}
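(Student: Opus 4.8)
The plan is to bound the variance by the (uncentered) second moment and then exploit the fact that the call payoff is dominated by the stock price itself. Concretely, since $f(S_T)=\max\{0,S_T-K\}\le S_T$ and $f(S_T)\ge 0$, we have $f(S_T)^2\le S_T^2$ pointwise, and hence
\begin{equation}
\mathbbm V_{\mathbbm Q}[f(S_T)] \le \mathbbm E_{\mathbbm Q}[f(S_T)^2] \le \mathbbm E_{\mathbbm Q}[S_T^2].
\end{equation}
So the whole problem reduces to evaluating the second moment of a log-normal random variable, which we already have all the tools for from Lemma~\ref{lemmaMartingale}.

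Next I would substitute the explicit solution $S_T = S_0 e^{\sigma \tilde W_T + (r-\sigma^2/2)T}$ under $\mathbbm Q$, so that $S_T^2 = S_0^2 e^{2\sigma \tilde W_T + (2r-\sigma^2)T}$, and use that $\tilde W_T \sim \mathcal N(0,T)$. The Gaussian integral $\mathbbm E[e^{2\sigma \tilde W_T}] = e^{2\sigma^2 T}$ is computed exactly as in the proof of Lemma~\ref{lemmaMartingale} (rescale $\tilde W_T = \sqrt{T}\,x$ and complete the square). Collecting the exponents gives
\begin{equation}
\mathbbm E_{\mathbbm Q}[S_T^2] = S_0^2\, e^{(2r-\sigma^2)T}\, e^{2\sigma^2 T} = S_0^2\, e^{2rT}\, e^{\sigma^2 T}.
\end{equation}
This expression is manifestly $\Ord{{\rm poly}(S_0, e^{rT}, e^{\sigma^2 T}, K)}$ (quadratic in $S_0$ and $e^{rT}$, linear in $e^{\sigma^2 T}$, and independent of $K$), which establishes the claim with $\lambda^2 := S_0^2 e^{2rT} e^{\sigma^2 T}$.

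There is essentially no hard step here; the only thing to be careful about is the direction of the inequalities (using $0\le f(S_T)\le S_T$, valid because $K\ge 0$), and remembering that dropping the $-(\mathbbm E_{\mathbbm Q}[f(S_T)])^2$ term only loosens the bound. If one wanted a sharper constant one could instead integrate $f(S_T)^2$ directly over the region $\{S_T\ge K\}$, producing an expression involving $\Phi(d_1)$, $\Phi(d_2)$ and a third cumulative term, but this refinement is unnecessary for the polynomial-in-parameters statement, which is all that is needed to feed into Theorem~\ref{theoremMeanL2}.
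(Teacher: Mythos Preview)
Your argument is correct and actually more direct than the paper's. The paper computes the variance \emph{exactly}: it writes $f(S_T)^2=(S_T-K)^2\mathbbm 1_{S_T\ge K}$, expands the square, and evaluates each of the three terms as a truncated Gaussian integral, obtaining closed-form expressions in $\Phi(d_1)$, $\Phi(d_2)$, and a new $\Phi(d_3)$ with $d_3=d_1+\sigma\sqrt{T}$; only after subtracting $(\mathbbm E_{\mathbbm Q}[f(S_T)])^2$ does it read off the big-$\mathcal O$ bound $S_0^2 e^{(2r+\sigma^2)T}+S_0Ke^{rT}+K^2$. Your route, by contrast, bypasses the indicator and the $\Phi$ functions entirely via the pointwise inequality $0\le f(S_T)\le S_T$, reducing everything to the second moment of a log-normal. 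This is exactly the ``refinement is unnecessary'' alternative you describe in your closing paragraph; the paper takes that longer path. Your approach buys brevity and avoids introducing $d_3$; the paper's approach buys the exact variance (and hence a slightly sharper constant that also exhibits the $K$-dependence), though for the purpose of feeding into Theorem~\ref{theoremMeanL2} the two bounds are equivalent.
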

\begin{proof}
The variance is exactly computable. First note that $S_T^2 = S_0^2 e^{2\sigma \tilde W_T +(2r -\sigma^2) T}$. We have
\begin{eqnarray}
\mathbbm E_{\mathbbm Q} [f(S_T)^2] &=&  \mathbbm E_{\mathbbm Q}  [(S_T-K)^2 \mathbbm 1_{S_T \geq K} ] \\
&=&   \mathbbm E_{\mathbbm Q}  [(S_T^2-2 S_T K + K^2) \mathbbm 1_{S_T \geq K} ] \nonumber .
\end{eqnarray}
The last two terms were calculated analogously in the Black-Scholes price already. They are
\begin{eqnarray}
2 K E_{\mathbbm Q}  [ S_T\mathbbm 1_{S_T \geq K}  ] &=& 2 K e^{rT} S_0 \Phi (d_1)\\
E_{\mathbbm Q}  [K^2\mathbbm 1_{S_T \geq K} ] &=& K^2 \Phi (d_2).
\end{eqnarray}
The first term is $\mathbbm E_{\mathbbm Q} [S_T^2\mathbbm 1_{S_T \geq K} ]$ which is proportional to 
\begin{eqnarray}
E_{\mathbbm Q}  \left[ e^{2\sigma \tilde W_T}  \mathbbm 1_{\tilde W_T \geq\frac{ \left(\log \frac{K}{S_0} -(r-\frac{\sigma^2}{2}) T \right)}{\sigma}}\right] \\
\quad =\int_{\frac{ \left(\log \frac{K}{S_0} -(r-\frac{\sigma^2}{2}) T \right)}{\sigma \sqrt{T}}}^\infty dx\ p(x) e^{ 2 \sigma \sqrt{T} x}\\
\quad = e^{2\sigma^2 T} \Phi(d_3).
\end{eqnarray}
with
\begin{equation}
d_3= \frac{1}{\sigma \sqrt{T}} \left[\log \left( \frac{S_0}{K} \right) + \left(r+\frac{3\sigma^2}{2} \right) T \right].
\end{equation}
Putting it all together we obtain
\begin{eqnarray}
V_{\mathbbm Q}[f(S_T)]  &=& E_{\mathbbm Q} [f(S_T)^2] -E_{\mathbbm Q} [f(S_T)]^2 \\
&=& e^{(2r +\sigma^2) T} S_0^2 \Phi(d_3) -2 K e^{rT} S_0 \Phi (d_1)\nonumber \\ && + K^2 \Phi (d_2)
 - (S_0 e^{rT} \Phi (d_1) - K \Phi (d_2) )^2 \nonumber \\ 
&=& \Ord{ S_0^2 e^{(2r +\sigma^2) T} + S_0 K e^{rT}+ K^2 }.
\end{eqnarray}
We obtain an upper bound as
$\mathbbm V_{\mathbbm Q}[f(S_T)] = \Ord{{\rm poly}(S_0, e^{rT},e^{\sigma^2 T},K)}$.
\end{proof}

\section{Introduction to quantum computing}\label{appendixQIntro}

Instead of the elementary bits of conventional computing, which take either the value $0$ or $1$, the unit of information in quantum computing is known as a \emph{qubit}. A qubit is not restricted to being in one of the two states $0$ or $1$ exclusively, and can instead exist in a superposition. In quantum mechanics we use the \emph{bra-ket} notation, with the kets $\ket{0}$ and $\ket{1}$ representing $0$ and $1$, respectively. The state of a qubit can then be written as
\begin{equation}
\ket{\psi} = \alpha \ket{0} + \beta \ket{1},
\end{equation}
in terms of so-called \emph{amplitudes} $\alpha, \beta \in \mathbb{C}$. This means that the qubit can be in either of the states with some probability. Measuring the qubit collapses it onto $\ket{0}$ with probability $|\alpha|^{2}$ and onto $\ket{1}$ with probability $|\beta|^{2}$. Hence, $|\alpha|^{2} + |\beta|^{2} = 1$.

States of $n$ qubits exist in a $2^{n}$-dimensional Hilbert space and can be described by the vector
\begin{equation}
\ket{\psi} = \sum_{i = 0}^{2^{n} - 1} \alpha_{i}\ket{i},
\end{equation}
with $\alpha_{i} \in \mathbb{C}$ and $\ket{i} = \ket{i_{1}}\otimes \ket{i_{2}}\otimes \ldots \otimes \ket{i_{n}}$ the vector of basis states over $n$ qubits such that $[i_{1},i_{2}, \ldots , i_{n}]$ is the length $n$ binary representation of $i$. The probability that the $n$ qubits are in state $\ket{i}$ is equal to $|\alpha_{i}|^{2}$, where $\sum_{i=0}^{2^{n}-1}|\alpha_{i}|^{2} = 1$. Note that the tensor product symbol $\otimes$ denotes the joining of two quantum systems (e.g. qubits) in quantum mechanics.

Isolated quantum system evolve according to unitary transformations, i.e.~so that $\ket{\psi} $ goes to $\ket{\psi '} = U \ket{\psi}$ for some unitary $U$ (which satisfies $U^{\dagger}U = UU^{\dagger} = \mathcal{I}$, where $\dagger$ is the conjugate transpose and $\mathcal{I}$ is the identity operator). Unitaries can be controlled externally and also applied to collections of multiple qubits, as is often the case in this work. A standard library of one- and two-qubit unitaries can also be applied, and are often called quantum gates~\cite{nielsen2002quantum} in analogy to the gates used used for binary logic operations, e.g.~AND, OR, etc. This work uses the Hadamard gate $\mathcal{H}$, which acts on one qubit with the transformation
\begin{equation}
\ket{0} \rightarrow \frac{1}{\sqrt{2}}\left(\ket{0}+\ket{1} \right), \qquad \ket{1} \rightarrow \frac{1}{\sqrt{2}}\left(\ket{0}-\ket{1} \right).
\end{equation}
The inverse quantum Fourier transform $QFT^{-1}$ is also utilized in this work, which acts to perform an inverse discrete Fourier transform on the amplitudes $\alpha_{j}$, i.e. taking the corresponding state $\ket{j}$ to
\begin{equation}
\frac{1}{\sqrt{2^{n}}}\sum_{k=0}^{2^{n} - 1} \omega^{jk} \ket{k},
\end{equation}
with $\omega = e^{- 2 \pi i / 2^{n}}$. Unitaries acting on one or more qubits can also be controlled by another qubit, e.g.~so that the unitary is enacted if the control qubit is in state $\ket{1}$ and the unitary is not enacted if the control qubit is in state $\ket{0}$. One important example is the controlled-NOT (CNOT) gate, which swaps the state of a qubit from $\ket{0}$ to $\ket{1}$ (and vice versa) whenever a control qubit is in state $\ket{1}$. Its extension is the Toffoli gate, which swaps the state of a qubit from $\ket{0}$ to $\ket{1}$ (and vice versa) only when \emph{two} control qubits are in state $\ket{1}$.

Compositions of unitaries acting on multiple qubits, prepared in various initial states, can be created to form quantum circuits. These quantum circuits are able to carry out quantum algorithms that may perform a task faster than on a classical device, e.g.~Shor's algorithm ~\cite{nielsen2002quantum}. The unitary nature of quantum algorithms means that all quantum circuits are \emph{reversible}. Reversibility has implications for the structure of quantum circuits, and marks a departure from the more familiar classical circuits. For example, the AND gate is not reversible because a single bit is returned from which the two input bits cannot be inferred in all cases.

Quantum circuits can be represented pictorially using a quantum circuit diagram, as is the case in Fig.~\ref{figure1} (a) and (c). Here, each qubit is represented by a horizontal line, with time moving from left to right. The unitaries are applied to various qubits by drawing a box over the qubit lines, with control from another qubit symbolized by a black dot and line connecting to the box. Measurements on the qubits are represented by the $\Qcircuit @C=1em @R=.3em {& \meter & \qw}$ symbol. Finally, the CNOT gate is written as $\Qcircuit @C=1em @R=.3em {& \ctrl{1} & \qw \\ & \targ & \qw \\}$, with the upper qubit acting as the control and the lower qubit acting as the target, while the Toffoli gate is represented as
$\Qcircuit @C=1em @R=.3em {& \ctrl{1} & \qw \\& \ctrl{1} & \qw \\ & \targ & \qw \\}$, with control from the top two qubits and target on the bottom qubit.

\section{Quantum circuits for arithmetic operations}
\label{appendixArith}

In this Appendix, we review how integers and real numbers are represented and processed 
on a quantum computer \cite{nielsen2002quantum}.
An integer $0\leq a < N = 2^{m}$ can be represented in binary with $m$ bits $x_i$, $i=0,\dots,m-1$ such that 
\begin{equation} 
a =2^0 x_0 + 2^1 x_1 + 2^2 x_2 + \dots + 2^{m-1} x_{m-1}.
\end{equation}
The largest number is $N-1$.
The qubit encoding of the integer is where $\ket{a}$ refers to an $m$ qubit register prepared in the computational basis state given by the $x_i$, i.e.~$\ket{a} = \ket{x_0,x_1,\dots,x_m}$.
For real numbers $0 \leq r < 1$, we can use $m$ bits $b_i$, $i=0,\dots,m-1$ such that
\begin{equation}
r=  \frac{b_0}{2} + \frac{b_1}{4} + \dots + \frac{b_{m-1}}{2^{m}} =: [.b_1,b_2,\dots,b_{m-1}].
\end{equation}
The accuracy is $1/2^m$. 
The qubit encoding of the real number is where $\ket{r}$ refers to a $m$ qubit register prepared in the computational basis state given by the $b_i$, i.e.~$\ket{r} = \ket{b_0,b_1,\dots,b_{m-1}}$.
For signed integers or reals, we have an additional sign quantum bit $\ket s$.

Any operation performed on a classical computer can be written as a transformation 
from $n$ to $m$ bits, that is $F: \{ 0,1\}^n \to  \{ 0,1\}^m$. A central result of reversible computing is
that the number of input and output bits can be made the same and the function $F$ can be mapped to a function
$F': \{ 0,1\}^{n+m} \to  \{ 0,1\}^{n+m}$ which is given by $F'(x,y) = (x,y \oplus F(x))$. $F'$ is a reversible function and a permutation. This permutation can be realized with a circuit that consist only of negation and Toffoli gates. If $F$ is efficiently computable then the circuit depth is at most a polynomial in $n+m$. The classical circuit then immediately translates into a quantum circuit consisting of bit flip $\sigma_z$ operations and Toffoli gates. The Toffoli gate can be broken down into a series of two qubit CNOTs and Hadamard and T gates. 

We now review how basic arithmetic operations can be performed on a quantum computer. 
Using the following basic gates, a number of arithmetic operations can be constructed \cite{Vedral1996}:
\begin{equation}
\mbox{
\Qcircuit @C=.5em @R=0em @!R {
 \lstick{1} & \multigate{2}{SUM} & \qw & & & \lstick{1} & \qw & \qw& \ctrl{2} & \qw \\
\lstick{2} & \ghost{SUM} & \qw &
\push{\rule{.3em}{0em}=\rule{.3em}{0em}} & & \lstick{2} & \ctrl{1} &\qw & \qw & \qw \\
 \lstick{3} & \ghost{SUM} & \qw & & & \lstick{3} & \targ  &\qw & \targ  & \qw
}
}
\end{equation}

\begin{equation}
\mbox{
\Qcircuit @C=.5em @R=0em @!R {
 \lstick{1} & \multigate{3}{CY} & \qw & & & \lstick{1} & \qw & \qw& \qw & \qw & \ctrl{2} & \qw \\
\lstick{2} & \ghost{CY} & \qw &
\push{\rule{.3em}{0em}=\rule{.3em}{0em}} & & \lstick{2} & \ctrl{1} &\qw & \ctrl{1} & \qw & \qw & \qw \\
 \lstick{3} & \ghost{CY} & \qw & & & \lstick{3} & \ctrl{1}  &\qw & \targ & \qw & \ctrl{1}  & \qw
 \\
 \lstick{4} & \ghost{CY} & \qw & & & \lstick{4} & \targ  &\qw & \qw & \qw & \targ  & \qw
}
}
\end{equation}
where $CY$ represents the `carry' operation. Composing these gates can achieve addition, multiplication, exponentiation and other operations. 
Numerous other works provide circuits for these operations with improved performance and lower gate requirements \cite{Beckman1996,VanMeter2005,Draper2006,Takahashi2010,Bhaskar2016,Wiebe2016arith}.

There exists a quantum circuit that performs the addition modulo N. Given two integers $0\leq a,b < N $ we can  construct a circuit of the form 
\begin{equation}
\mbox{
\Qcircuit @C=.5em @R=0em @!R {
 \lstick{\ket{a} } & \multigate{2}{ADD} & \qw & \rstick{\ket{a}} \\
 & &  &  \\
 \lstick{\ket{b} } & \ghost{ADD} & \qw &  \rstick{\ket{a+b},}
}
}
\end{equation}
defined via a gate sequence of SUM and CYs in Ref. \cite{Vedral1996}.
In addition, a circuit for addition modulo $N$ can be constructed 
\begin{equation}
\mbox{
\Qcircuit @C=.5em @R=0em @!R {
 \lstick{\ket{a} } & \multigate{2}{ADD\ N} & \qw & \rstick{\ket{a}} \\
 & &  &  \\
 \lstick{\ket{b} } & \ghost{ADD\ N} & \qw &  \rstick{\ket{a+b \mod N}.}
}
}
\end{equation}
There also exists a quantum circuit that performs the multiplication
\begin{equation}
\mbox{
\Qcircuit @C=.5em @R=0em @!R {
 \lstick{\ket{x} } & \multigate{2}{MULT(a)\ N} & \qw & \rstick{\ket{x}} \\
 & &  &  \\
 \lstick{\ket{0} } & \ghost{MULT(a) \ N} & \qw &  \rstick{\ket{a \times x \mod N},}
}
}
\end{equation}
as well as a quantum circuit that performs the exponentiation
\begin{equation}
\mbox{
\Qcircuit @C=.5em @R=0em @!R {
 \lstick{\ket{x} } & \multigate{2}{EXP(a)} & \qw & \rstick{\ket{x}} \\
 & &  &  \\
 \lstick{\ket{0} } & \ghost{EXP(a)} & \qw &  \rstick{\ket{a^x \mod N}.}
}
}
\end{equation}
We would like to implement the call option payoff function
\begin{equation}
a^+ = \max \{ 0, a \}.
\end{equation}
We can implement this as a reversible circuit
\begin{equation}
\mbox{
\Qcircuit @C=.5em @R=0em @!R {
 \lstick{\ket{a,s} } & \multigate{2}{MAX(0)} & \qw & \rstick{\ket{a,s}} \\
 & &  &  \\
 \lstick{\ket{0} } & \ghost{MAX(0)} & \qw &  \rstick{\ket{ a^+ },}
}
}
\end{equation}
which performs
\begin{equation}
\ket{a,s,0} \to
\begin{cases}
\ket {a,s,a}& \text{if } \ket s =\ket 0\\
\ket {a,s,0}              & \text{if } \ket s =\ket 1
\end{cases}.
\end{equation}
Here, the sign bit is used as a controller and a controlled addition is performed if the sign bit is positive.
There exists a circuit for mapping the Brownian motion to the stock price, which consists of the basic operations shown above, 
\begin{equation}
\mbox{
\Qcircuit @C=.5em @R=0em @!R {
 \lstick{\ket{x} } & \multigate{2}{S(\sigma, r, t)} & \qw & \rstick{\ket{x}} \\
 & &  &  \\
 \lstick{\ket{0} } & \ghost{S(\sigma, r, t)} & \qw &  \rstick{\ket{ e^{\sigma x +(r-\sigma^2/2)t }}.
}
}
}
\end{equation}
Combining the stock price with the max function, we obtain the circuit for mapping the Brownian motion outcome to the payoff for the European call option
\begin{align} \label{circPayoff}
 \Qcircuit @C=.5em @R=0em @!R {
 \lstick{\ket{x} } & \multigate{2}{CALL(K, \sigma,r, T)} & \qw & \rstick{\ket{x}} \\
 & &  &  \\
 \lstick{\ket{0} } & \ghost{CALL(K,\sigma,r, T)} & \qw &  \rstick{\ket{\tilde v_{\rm euro}(x) }.
}
}
\end{align}
with $\tilde v_{\rm euro}(x) \equiv \tilde v_{\rm euro}(x,K,\sigma,r, t)$ the bit approximation of the payoff function Eq.~(\ref{eqBMtoPayoff}).

\section{Applying the operator $\mathcal R$}\label{App:Rot}
This Appendix shows implementation of the operator $\mathcal R$, which rotates an ancilla based on the payoff function, where the payoff function is given by an $n$-bit quantum circuit, similar to circuit (\ref{circPayoff}).
Using the option payoff circuit, the steps to implement $\mathcal R$ are (the notation omits ancilla qubits in the $ \ket 0 $ state)
\begin{eqnarray}
 \ket j   &\to & \ket j \ket {\tilde v(x_j)} \\
&\to &\ket j \ket {\tilde v(x_j)}  \left (\sqrt{1-\tilde v(x_j)} \ket 0 + \sqrt{\tilde v(x_j)} \ket 1 \right) \\
 &\to & \ket j \left (\sqrt{1-\tilde v(x_j)} \ket 0 + \sqrt{\tilde v(x_j)} \ket 1 \right ) \\
 &\equiv& \mathcal R \ket j \ket 0.
\end{eqnarray}

\section{Preparation of the quantum state encoding the sampling distribution}
\label{appendixGrover}

Consider a probability density $p(x)$ of a single variable $x$. Assume we have discretized the probability over some interval,
such that for some integer $n$, we have $\{ p_j \}$ for $j=0,\dots, 2^n-1$.
Assume that $\sum_j p_j =1$.
The task is to show an algorithm $\mathcal G$ without measurements such that 
\begin{equation}
\mathcal G \ket{0^n} =: \ket \psi = \sum_{j=0}^{2^n-1} \sqrt{p_j} \ket j.
\end{equation}
Further assume that there exists a shallow classical circuit that can efficiently compute the sums (subnorms) 
\begin{equation}
\sum_{j=a}^b p_j \approx \int_{x_a}^{x_b} p(x) dx,
\end{equation}
for any $a\leq b = 0,\dots, 2^n-1$, $a \leq b$.
Thus for $m=1,\dots,n$ we can efficiently compute the probabilities 
\begin{equation}
p_k^{(m)} = \sum_{j= k 2^{n-m}}^{(k+1) 2^{n-m}-1} p_j,
\end{equation}
with $k=0,\dots,2^m-1$.

The quantum algorithm goes as follows \cite{Grover2002}. For $m<n$, assume we have prepared the state
\begin{equation}
\ket{\psi^{(m)}} = \sum_{k=0}^{2^m-1} \sqrt{p_k^{(m)}} \ket k.
\end{equation}
We would like to show by induction that we can prepare the state
\begin{equation}
\ket{\psi^{(m+1)}} = \sum_{k=0}^{2^{m+1}-1} \sqrt{p_k^{(m+1)}} \ket k.
\end{equation}
Define the quantities 
\begin{equation}
f(k,m) = \frac{p_{2k}^{(m+1)} }{p_k^{(m)}}, 
\end{equation}
where in the denominator there is the sum of all the elements of the $k$-th interval at the $m$-th discretization level and in the numerator we have the sum of the left half of these elements. This quantity allows to go up one level of discretization to $m+1$. 
Also define $\theta_k^{(m)} = \arccos \sqrt{f(k,m)}$. 
The operation
\begin{equation}
\ket k \ket 0 \to \ket k \ket  {\theta_k^{(m)}}
\end{equation}
is enabled by the efficient computability of $f(k,m)$.
Now proceed 
\begin{eqnarray}
\ket{\psi^{(m)}} \ket 0 \ket 0 &\to&\sum_{k=0}^{2^m-1} \sqrt{p_k^{(m)}} \ket k \ket{ \theta_k^{(m)}} \ket 0  \\
&\to & \sum_{k=0}^{2^m-1} \sqrt{p_k^{(m)}} \ket k \left ( \cos \theta_k^{(m)} \ket 0 \right. \\ \nonumber & & \qquad \qquad \qquad \left. + \sin \theta_k^{(m)} \ket 1\right) \nonumber \\
&\equiv & \ket{\psi^{(m+1)}} .
\end{eqnarray}
In the second step the register $\ket {\theta_k^{(m)}}$ was uncomputed. 

\section{Phase estimation}
\label{appendixPhaseEstimation}

This Appendix shows the basic steps for phase estimation and provides an analysis of errors and the success probability.

First, the illustrative example of the 
single-qubit phase estimation is presented, then we review the multi-qubit setting. We follow closely references \cite{Cleve1997,nielsen2002quantum,Xu2018}.
The single-qubit phase estimation is for demonstration purposes since it uses a known phase $\theta$. 
Here, we apply the single-qubit gate
\begin{equation}
\mathcal U_z=e^{-i \frac{\theta}{2} \sigma_z} 
\end{equation}
for varying powers. 
In this treatment, the single qubit is effectively simulating an $m$-qubit register.
To obtain the least significant bit of the phase, the first step is to apply $\mathcal U_z^{M/2}$, where $M$ is such that $M \geq 2 \pi / \epsilon$ and $M = 2^{m}$ for an integer $m$:
\begin{eqnarray}
\frac{ \left( \ket 0 + \ket 1 \right)}{\sqrt 2} &\to& \frac{1}{\sqrt 2} \left( e^{-i \frac{\theta}{2} \frac{M}{2}} \ket 0 +  e^{+i \frac{\theta}{2} \frac{M}{2}}  \ket 1 \right)  \\
 &\to& \frac{1}{2} \left( \left(e^{-i \frac{\theta}{2} \frac{M}{2}} +e^{+i \frac{\theta}{2} \frac{M}{2}} \right) \ket 0 + \right . \nonumber \\
 && \left .  \left (e^{-i \frac{\theta}{2} \frac{M}{2}} - e^{+i \frac{\theta}{2} \frac{M}{2}} \right ) \ket 1 \right) \\
  &\to& \frac{1}{2} \left( 2 \cos \left (\frac{\theta M}{4} \right ) \ket 0 +  2 i \sin \left (\frac{\theta M}{4} \right)  \ket 1 \right). \nonumber
\end{eqnarray}
The measurement probabilities are
\begin{equation}\label{eqProbm}
P_0^{(m)} = \cos^2 \left (\frac{\theta M}{4} \right ), \quad P_1^{(m)} = \sin^2 \left (\frac{\theta M}{4} \right ).
\end{equation}
We now use $\mathcal U_z^{k/2}$  for $k=m-1,\dots,1$ to estimate the remaining bits of the phase,
\begin{eqnarray}
&&\frac{1}{\sqrt 2}\left( \ket 0 + \ket 1 \right) \to \frac{1}{\sqrt 2} \left( e^{-i \frac{\theta}{2}  2^{k-1}} \ket 0 +  e^{+i \frac{\theta}{2}  2^{k-1}}  \ket 1 \right)   \nonumber \\
&\to& \frac{1}{\sqrt 2} \left( e^{-i \frac{\theta}{2}  2^{k-1}} \ket 0 +  e^{-i \frac{\theta'}{2} 2^{k-1} } e^{+i \frac{\theta}{2}  2^{k-1}}  \ket 1 \right).
\end{eqnarray}
The last step applies a phase $e^{-i \theta' 2^{k-1}}$ to $\ket 1$ given by the known bits, using the identity 
\begin{equation}
e^{-i \theta' 2^{k-1}} = e^{-i \pi [.b_{k+1},\dots,b_m] }.
\end{equation}
After another Hadamard gate, we obtain
\begin{eqnarray}
\to \frac{1}{\sqrt 2} \left( ( e^{-i \frac{\theta}{2}  2^{k-1}} + e^{-i \frac{\theta'}{2} 2^{k-1} } e^{+i \frac{\theta}{2}  2^{k-1}} )\ket 0 \right. \nonumber \\ + \left. ( e^{-i \frac{\theta}{2}  2^{k-1}}  - e^{-i \frac{\theta'}{2} 2^{k-1} } e^{+i \frac{\theta}{2}  2^{k-1}} ) \ket 1 \right).
\end{eqnarray}
From this, the measurement probabilities are given by \cite{Xu2018}
\begin{eqnarray}\label{eqProbk}
P_0^{(k)} &=& \frac{1}{2} + \frac{1}{2}\cos \left (2^k \theta   - \pi [.b_{k+1},\dots,b_m] \right ), \nonumber \\ 
P_1^{(k)} &=&  1- P_0^{(k)}.
\end{eqnarray}
These probabilities can be sampled to obtain an $m$-bit estimate of $\theta$.

In the main text and Fig.~\ref{figure1}, we obtain the output probability distribution for the best $m$-bit estimate for the phase $\theta$ using standard $m$-qubit phase estimation, as typically discussed in the literature \cite{Cleve1997,nielsen2002quantum}. This presents a coherent, controlled version of the procedure above. Using the eigenstate $ \ket {\psi_\theta}$ associated with the eigenvalue $\theta$, an $m$-qubit register, and the operation $\mathcal Q^c$, Eq.~(\ref{eqQc}), we can perform 
\begin{equation}
\sum_{y=0}^{2^m-1} \ket y \mathcal Q^y \ket {\psi_\theta}.
\end{equation}
In the $m$-qubit register, we obtain
\begin{eqnarray}
(\ket 0 + e^{i2 \pi 2^{m-1} \theta} \ket 1)(\ket 0 + e^{i2 \pi 2^{m-2} \theta} \ket 1) \times \\  \times \dots \times (\ket 0 + e^{i2 \pi \theta} \ket 1)\nonumber \\ = \sum_{y=0}^{2^m -1} e^{i2 \pi \theta y} \ket y. \nonumber
\end{eqnarray}
After applying the inverse Quantum Fourier transform we have the state 
\begin{equation}
\frac{1}{2^m} \sum_{x=0}^{2^m -1} \sum_{y=0}^{2^m -1} e^{-i2\pi \frac{x y}{2^m}} e^{i2 \pi \theta y} \ket x.
\end{equation}
Let $\hat \theta$ be the  $m$-bit approximation of $\theta$ and $\theta = \hat \theta +\delta$. Using these definitions leads to
\begin{eqnarray}
\frac{1}{2^m} \sum_{x=0}^{2^m -1} \sum_{y=0}^{2^m -1} e^{-i2\pi \frac{x y}{2^m}} e^{i2 \pi (\hat \theta +\delta) y} \ket x \nonumber
= \\ \frac{1}{2^m} \sum_{x=0}^{2^m -1} \sum_{y=0}^{2^m -1} e^{i2\pi \frac{(2^m \hat \theta - x)y}{2^m}} e^{i2 \pi\delta y} \ket x \\ \label{eqPhaseEstState}
=:\sum_{x=0}^{2^m -1} \alpha_\theta(x) \ket x.
\end{eqnarray}
The estimate $\hat \theta$ 
has the amplitude
\begin{equation}
\alpha_\theta(2^m \hat \theta)  = \frac{1}{2^m} \sum_{y=0}^{2^m -1} e^{i2 \pi\delta y} = \frac{1}{2^m} \left( \frac{1-e^{i2\pi \delta2^m}}{1-e^{i2\pi \delta} }\right),
\end{equation}
using the geometric series. This occurs with probability
\begin{equation}\label{eqSuccessBest}
P(\hat \theta) =  \frac{1}{4^m} \left \vert \frac{1-e^{i2\pi \delta 2^m}}{1-e^{i2\pi \delta} } \right \vert^2 
=  \frac{1}{4^m} \left \vert \frac{1-e^{i2\pi \theta 2^m}}{1-e^{i2\pi (\theta -\hat \theta)} } \right \vert^2.
\end{equation}
One can efficiently sample from this bit-string distribution via the individual bit probabilities 
given in Eqs.~(\ref{eqProbm}) and (\ref{eqProbk}).

We now provide an error analysis of phase estimation. We follow closely the discussion in previous references. 
The phase estimation algorithm provides an estimate $\hat \theta$ that is accurate with $\epsilon$, thus we have
\begin{equation}
\vert \hat \theta -  \theta \vert \leq \epsilon.
\end{equation}
The quantity of interest is the expectation value, which is related to the phase $\theta$ from Eq.~(\ref{eqMeanToAngle}) as
\begin{equation}
1-2 \mu = \cos \frac{\theta}{2}.
\end{equation}
We would like to determine a bound for the accuracy of this expectation value, 
i.e. determine 
\begin{equation}
\vert \hat \mu - \mu \vert.
\end{equation}
First, we can use the Taylor expansion $\cos ((\theta \pm \epsilon)/2) = \cos\theta/2 - (\pm \epsilon) \sin (\theta/2) + \Ord{\epsilon^2}$ to arrive at the first-order bound
\begin{equation}
\vert \hat \mu - \mu \vert \leq \Ord{\frac{\epsilon}{2} \sin{\frac{\hat \theta}{2}}}.
\end{equation}
More generally, we can show the following.
\begin{lemma}
Assume $\vert \hat \theta -  \theta \vert \leq \epsilon$, $0 \leq \hat \theta < \pi$, and $0 < \epsilon \leq 1$. Then 
\begin{equation}
\vert \hat \mu - \mu \vert \leq \vert \cos ((\hat \theta + \epsilon)/2) - \cos \hat \theta /2 \vert.
\end{equation}
\end{lemma}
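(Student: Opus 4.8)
The plan is to reduce the statement to an elementary trigonometric inequality and then dispose of two easy regimes. First, applying the mean--angle relation Eq.~(\ref{eqMeanToAngle}) both to the true mean and to its estimate gives $1-2\mu=\cos(\theta/2)$ and $1-2\hat\mu=\cos(\hat\theta/2)$, hence
\[
|\hat\mu-\mu|=\tfrac12\bigl|\cos(\hat\theta/2)-\cos(\theta/2)\bigr|,
\]
and the hypothesis $|\hat\theta-\theta|\le\epsilon$ lets us work with $\theta$ ranging over $[\hat\theta-\epsilon,\hat\theta+\epsilon]$. The key move is to apply the product-to-sum identity $\cos A-\cos B=-2\sin\tfrac{A+B}{2}\sin\tfrac{A-B}{2}$ to both this quantity and to the claimed right-hand side; the factor $\tfrac12$ on the left then cancels against the factor $2$ produced by the identity, and the inequality to be proved becomes
\[
\Bigl|\sin\tfrac{\hat\theta+\theta}{4}\Bigr|\,\Bigl|\sin\tfrac{\hat\theta-\theta}{4}\Bigr|\ \le\ 2\,\sin\tfrac{2\hat\theta+\epsilon}{4}\,\sin\tfrac{\epsilon}{4}.
\]

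Next I would bound the two factors on the left separately. For the ``difference'' factor, $\tfrac{|\hat\theta-\theta|}{4}\le\tfrac{\epsilon}{4}\le\tfrac14<\tfrac{\pi}{2}$, so monotonicity of $\sin$ on $[0,\pi/2]$ gives $|\sin\tfrac{\hat\theta-\theta}{4}|\le\sin\tfrac{\epsilon}{4}$. For the ``sum'' factor I would split on the sign of $\tfrac{2\hat\theta+\epsilon}{4}-\tfrac{\pi}{2}$. If $\tfrac{2\hat\theta+\epsilon}{4}\le\tfrac{\pi}{2}$, then $\bigl|\tfrac{\hat\theta+\theta}{4}\bigr|\le\tfrac{2\hat\theta+\epsilon}{4}$ — when $\hat\theta+\theta\ge0$ because $\theta\le\hat\theta+\epsilon$, and when $\hat\theta+\theta<0$ because then $|\theta|\le\epsilon-\hat\theta$ forces $|\hat\theta+\theta|\le\epsilon$ — so monotonicity again yields $|\sin\tfrac{\hat\theta+\theta}{4}|\le\sin\tfrac{2\hat\theta+\epsilon}{4}\le2\sin\tfrac{2\hat\theta+\epsilon}{4}$. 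If instead $\tfrac{2\hat\theta+\epsilon}{4}>\tfrac{\pi}{2}$, then $\hat\theta>\pi-\epsilon/2$, and since $\hat\theta<\pi$ and $\epsilon\le1$ the argument lies in $(\tfrac{\pi}{2},\tfrac{\pi}{2}+\tfrac14)$, where $\sin\tfrac{2\hat\theta+\epsilon}{4}=\cos\bigl(\tfrac{2\hat\theta+\epsilon}{4}-\tfrac{\pi}{2}\bigr)\ge\cos\tfrac14>0.9$, so $2\sin\tfrac{2\hat\theta+\epsilon}{4}>1\ge|\sin\tfrac{\hat\theta+\theta}{4}|$. Multiplying the two bounds produces exactly the displayed inequality, and hence the lemma; as a consistency check, a first-order Taylor expansion recovers the bound $|\hat\mu-\mu|\le\Ord{\tfrac{\epsilon}{2}\sin(\hat\theta/2)}$ quoted just above the lemma.

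The only genuinely delicate point — the part I would treat most carefully — is the corner case with $\hat\theta$ near $\pi$, where $(\hat\theta+\epsilon)/2$ crosses $\pi/2$ and $\theta\mapsto\cos(\theta/2)$ stops behaving in the naive monotone way; this is precisely where the hypotheses $\hat\theta<\pi$ and $0<\epsilon\le1$ are used to keep the relevant angle inside $(\tfrac{\pi}{2},\tfrac{\pi}{2}+\tfrac14)$, and where the spare factor of $2$ from the product-to-sum step (rather than any sharp estimate) provides the slack that closes the argument. The symmetric worry at $\hat\theta$ near $0$, where $\theta$ may dip slightly below zero, is handled by the same parity-and-monotonicity bookkeeping, using the observation $|\theta|\le\epsilon-\hat\theta$. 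An alternative route avoiding the identity is to note directly that $\theta\mapsto\cos(\theta/2)$ is decreasing and concave on $[0,\pi]$ and to compare the two endpoints via $\cos\tfrac{\hat\theta-\epsilon}{2}+\cos\tfrac{\hat\theta+\epsilon}{2}=2\cos\tfrac{\hat\theta}{2}\cos\tfrac{\epsilon}{2}\le2\cos\tfrac{\hat\theta}{2}$, but the sign bookkeeping near $\theta=0$ and $\theta=\pi$ makes that path marginally more cumbersome.
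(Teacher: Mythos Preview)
Your argument is correct and follows the same core strategy as the paper: apply the product-to-sum identity $\cos A-\cos B=-2\sin\tfrac{A+B}{2}\sin\tfrac{A-B}{2}$ to both sides and then bound the two sine factors separately by monotonicity.

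The one substantive difference is how the ``sum'' factor is controlled near $\hat\theta=\pi$. The paper does not carry along the factor $\tfrac12$ from $|\hat\mu-\mu|=\tfrac12|\cos(\hat\theta/2)-\cos(\theta/2)|$; instead it invokes an additional hypothesis, writing ``by definition the bit estimate $\hat\theta\le\pi-\epsilon/2$'' (a fact coming from the phase-estimation context rather than from the lemma as stated). With that extra constraint one has $(2\hat\theta+\epsilon)/4\le\pi/2$ outright, so the monotonicity bound $|\sin\tfrac{\hat\theta+\theta}{4}|\le\sin\tfrac{2\hat\theta+\epsilon}{4}$ holds directly and no case split is needed. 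You instead take the lemma at face value, keep the spare factor of $2$, and use it precisely to absorb the corner case $(2\hat\theta+\epsilon)/4>\pi/2$ via the crude bound $2\sin\tfrac{2\hat\theta+\epsilon}{4}>1$. Your version is thus self-contained with respect to the stated hypotheses, at the cost of a slightly longer argument; the paper's version is shorter but leans on context not recorded in the lemma. Your handling of the possibility $\hat\theta+\theta<0$ is also more explicit than the paper's.
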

\begin{proof}
Use the trigonometric identity for the difference between cosines 
\begin{eqnarray}
&\vert \cos \hat \theta/2 - \cos  \theta/2 \vert = 2 \vert \sin ((  \hat\theta+  \theta)/4)\sin(( \hat \theta-  \theta)/4)\vert, \nonumber \\
&\vert \cos ((\hat \theta + \epsilon)/2) - \cos \hat \theta /2 \vert=
 2 \vert \sin ((  2\hat\theta+  \epsilon)/4)\sin(\epsilon/4)\vert. \nonumber
\end{eqnarray}
Note that by definition the bit estimate $\hat \theta \leq \pi - \frac{\epsilon}{2}$, thus, with $\vert \hat \theta -  \theta \vert \leq \epsilon$, we have  $\frac{ \hat\theta+  \theta}{4}\leq \frac{2\hat \theta +\epsilon}{4} \leq \frac{\pi} 
{2}$. Thus
\begin{equation}
\left| \sin \left (\frac{\hat \theta +\theta}{4} \right ) \right| \leq  \left| \sin \left (\frac{2 \hat \theta +\epsilon}{4} \right) \right|.
\end{equation}
Also 
\begin{equation}
\left| \sin\left( \frac{\hat \theta-  \theta}{4} \right )\right| \leq \left| \sin \left (\frac{\epsilon}{4} \right) \right|.
\end{equation}
\end{proof}

We now discuss increasing the probability of success for phase estimation. The probability of observing the best $m$-bit approximation is lower bounded by $8/\pi^2> 0.81$, from Eq.~(\ref{eqSuccessBest}) \cite{nielsen2002quantum}. To boost this success probability, multiple runs of phase estimation can be performed. The median of these multiple runs will have a higher success probability \cite{Nagaj2009,Montanaro2015,Xu2018}, as will be shown now.  
Let $\hat \theta_1, \dots, \hat \theta_D$ be the results of $D$ independent runs of phase estimation. 
The new estimate is the median $\hat \theta = {\rm Median} (\hat \theta_1, \dots, \hat \theta_D)$.
\begin{lemma}[\cite{Nagaj2009}] \label{lemmaMedian}
Let the desired accuracy be $\epsilon>0$. Let the probability that each sample falls outside the accuracy, i.e. $\vert \hat \theta_j - \theta\vert \geq \epsilon$,  be $0<\delta<1/2$. Then the probability that
the median is inaccurate is bounded by $p_f\leq \frac{1}{2} \left( 2 \sqrt{\delta(1-\delta)} \right)^D$.
\end{lemma}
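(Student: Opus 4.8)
The plan is to turn the failure of the median into a purely combinatorial statement about how many of the $D$ independent runs land outside the accuracy window, and then bound the resulting binomial tail term by term. Call a run $j$ \emph{bad} if $|\hat\theta_j-\theta|\ge\epsilon$, so each run is bad independently with probability $\delta$, and let $X$ be the number of bad runs, $X\sim\mathrm{Binomial}(D,\delta)$. First I would argue that if the median $\hat\theta=\mathrm{Median}(\hat\theta_1,\dots,\hat\theta_D)$ is itself bad, then strictly more than $D/2$ of the individual runs are bad. Indeed, taking the median to be the $\lceil D/2\rceil$-th order statistic, if $\hat\theta>\theta+\epsilon$ then every sample from the median position upward also exceeds $\theta+\epsilon$, which is at least $\lfloor D/2\rfloor+1$ samples, each satisfying $|\hat\theta_j-\theta|\ge\epsilon$; the mirror argument handles $\hat\theta<\theta-\epsilon$. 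Hence $p_f\le\mathbbm P[X>D/2]=\sum_{k>D/2}\binom{D}{k}\delta^k(1-\delta)^{D-k}$.

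Next I would bound the summands uniformly. For any $k>D/2$ write $\delta^k(1-\delta)^{D-k}=(\delta(1-\delta))^{D/2}\,(\delta/(1-\delta))^{\,k-D/2}$; since $0<\delta<1/2$ forces $0<\delta/(1-\delta)<1$ while $k-D/2>0$, the second factor is at most $1$, so $\delta^k(1-\delta)^{D-k}\le(\delta(1-\delta))^{D/2}$. Pulling this common bound out of the sum leaves $\sum_{k>D/2}\binom{D}{k}$, which is exactly $2^{D-1}$ when $D$ is odd and strictly smaller when $D$ is even. Therefore $p_f\le 2^{D-1}(\delta(1-\delta))^{D/2}=\tfrac12\,\bigl(2\sqrt{\delta(1-\delta)}\bigr)^{D}$, which is the claimed inequality.

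The only step that needs genuine care — the ``hard part'', such as it is — is the first reduction: one must be precise about the convention used for the median of an even number of points, since for the averaging convention the cleanest statement is only that \emph{at least} $D/2$ samples are bad, which slightly degrades the constant. Using the order-statistic convention (or noting that $D$ can always be chosen odd) avoids this, and everything downstream is an elementary binomial-tail estimate. As a sanity check on the constants, one can alternatively bound $\mathbbm P[X\ge D/2]$ by a Chernoff argument, optimizing $e^{-tD/2}\mathbbm E[e^{tX}]=e^{-tD/2}(1-\delta+\delta e^{t})^{D}$ at $e^{t}=(1-\delta)/\delta$, which reproduces the base $2\sqrt{\delta(1-\delta)}$; the term-by-term route is what supplies the extra factor $\tfrac12$ appearing in the lemma, so I would present that one.
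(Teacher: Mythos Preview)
Your argument is correct and is essentially the standard median-amplification proof that the paper has in mind: the paper does not actually supply its own proof of this lemma but simply writes ``The proof is provided in~\cite{Nagaj2009}'', and the Nagaj--Wocjan--Zhang argument is exactly the binomial-tail estimate you give. So there is nothing substantive to compare.

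One small sharpening of the caveat you already flag: the order-statistic convention by itself does not fully sidestep the even-$D$ issue. If $D=2m$ and you take the median to be the $m$-th order statistic, then the case $\hat\theta<\theta-\epsilon$ only forces $m=D/2$ bad samples, not strictly more than $D/2$; the same asymmetry appears (on the other side) if you take the $(m+1)$-th order statistic. Your parenthetical fix---take $D$ odd---is the clean way out, and then $\sum_{k>D/2}\binom{D}{k}=2^{D-1}$ holds exactly and the factor $\tfrac12$ in the statement follows as you wrote.
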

The proof is provided in~\cite{Nagaj2009}.
The confidence/success probability is defined as $c := 1- p_f$. Taking the logarithm of the failure probability, 
$\log 2(1-c) \leq D \log 2 \sqrt{\delta(1-\delta)}$, leads to 
\begin{equation}
\vert \log 1-c \vert \geq  D \vert \log 2 \sqrt{\delta(1-\delta)} \vert,
\end{equation}
which leads to 
\begin{equation}
D = \Ord{\vert \log 1-c \vert },
\end{equation}
if $0<\delta<1/2$ is a constant. Hence, for a confidence of $c$ one needs at most $\Ord{\vert \log 1-c 
\vert }$ independent repetitions of phase estimation. 

\end{document}